\let\@fnsymbol\@arabic
\newcommand{\R}{\mathbf{R}}
\newcommand{\C}{\mathbf{C}}
\newcommand{\Z}{\mathbf{Z}}
\newcommand{\D}{\mathcal{D}}
\newcommand{\U}{\mathcal{U}}
\newcommand{\EV}[1]{\mathbb{E}_{#1}\,}
\newcommand{\eps}{\varepsilon}
\newcommand{\be}{\mathbf{e}}
\newcommand{\bu}{\mathbf{u}}
\newcommand{\bx}{\mathbf{x}}
\newcommand{\bX}{\mathbf{X}}
\newcommand{\by}{\mathbf{y}}
\newcommand{\bz}{\mathbf{z}}
\newcommand{\e}{\mathrm{e}}
\renewcommand{\i}{\mathrm{i}}
\DeclareMathOperator{\sgn}{sgn}
\DeclareMathOperator{\supp}{supp}
\DeclareMathOperator{\Ent}{Ent}
\newtheorem{theorem}{Theorem}[section]
\newtheorem{lemma}[theorem]{Lemma}
\newtheorem{proposition}[theorem]{Proposition}
\newtheorem{corollary}[theorem]{Corollary}
\newtheorem{remark}[theorem]{Remark}
\newtheorem{definition}[theorem]{Definition}
\title{On the Theorem of Uniform Recovery of Random Sampling Matrices}
\author{Joel Andersson\thanks{Corresponding author. E-mail: joelan@kth.se, Phone: +4687906196} }
\author{Jan-Olov Str\"omberg\thanks{E-mail: jostromb@kth.se, Phone: +4687906676}}
\affil{Department of Mathematics, KTH, SE-100 44, Stockholm, Sweden}
\date{}
\begin{document}
\maketitle
\begin{abstract}
We consider two theorems from the theory of compressive sensing. Mainly a theorem concerning uniform recovery of random sampling matrices, where the number of samples needed in order to recover an $s$-sparse signal from linear measurements (with high probability) is known to be $m\gtrsim s(\ln s)^3\ln N$. We present new and improved constants together with what we consider to be a more explicit proof. A proof that also allows for a slightly larger class of $m\times N$-matrices, by considering what we call \emph{low entropy}. We also present an improved condition on the so-called restricted isometry constants, $\delta_s$, ensuring sparse recovery via $\ell^1$-minimization. We show that $\delta_{2s}<4/\sqrt{41}$ is sufficient and that this can be improved further to almost allow for a sufficient condition of the type $\delta_{2s}<2/3$.
\end{abstract}
{\bf Keywords:} compressive sensing, $\ell^1$-minimization, random sampling matrices, bounded orthogonal systems, restricted isometry property 
\section{Introduction}
The theory of compressive sensing has emerged over the last 6-8 years, with the results we will consider originally presented by Tao, Candès et.al. in \cite{Tao2} and \cite{Tao1}. Rudelson and Vershynin improved the results in \cite{RudVer} and further generalizations where made by Rauhut in \cite{Rau1}, which also offers a nice overview of the topic. Today there is a vast literature on the topic of which the authors would also like to mention also \cite{Don1} and \cite{Can2}. Spanning a wide range of results, we do not aim to do a rigorous overview here but instead refers to mentioned papers from where we have gathered a lot of inspiration and where many further references can be found.\\ 
The beginning of section \ref{sec:prel} provides only a brief introduction to the topic with concepts that should be familiar to those that have encountered compressive sensing before. At the end of the section we present an improved version of a theorem from \cite{MLric}, regarding when the \emph{restricted isometry property} implies the \emph{null space property}.\\ 
In section \ref{sec:lem} the most important inequalities and lemmas, to be used in the proof of the main results of section \ref{sec:mains}, is presented. This section could possibly be skipped by readers familiar with the topic.\\
Our main concern will be the theorem of uniform recovery for random sampling matrices. To our knowledge the best result known is due to Cheraghchi, Guruswami and Velingker in \cite{CGVpp}. The theorem is stated to hold for the special case of a discrete Fourier matrix, but the authors remark that it also goes through for bounded orthonormal matrices. The result is the best in terms of asymptotics, and we will re-use a lot of their arguments but also provide constants that are improved compared with earlier results that we have encountered. We feel that our proof is more explicit in some ways, which we hope can offer more understanding of the techniques. First, in section \ref{sec:cmp}, we go into more detail about the differences and similarities of our work compared to the other mentioned ones.

\section{Comparisons with previous results}
\label{sec:cmp}
In \cite{CGVpp}, the following version of theorem \ref{thm:main1} is proved (using our notations and terminology):
\begin{theorem}[\cite{CGVpp}, Theorem 19]
\label{cgv}
Let $A\in\C^{m\times N}$ be an orthonormal matrix with entries bounded by $O(1/\sqrt N)$. Then for every $\delta,\epsilon>0$ and $N>N_0(\delta,\epsilon)$, with probability at least $1-\epsilon$ the restricted isometry constants $\delta_s$ of $\sqrt{N/m}A$ are less than $\delta$ for some $m$ satisfying
\[
m\lesssim\frac{\ln(1/\epsilon)}{\delta^2}s(\ln s)^3\ln N.
\]
\end{theorem}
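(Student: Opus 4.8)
The plan is to realize the restricted isometry constant as the supremum of a centered empirical process, bound its expectation by chaining, and then invoke a concentration estimate to supply the $\ln(1/\epsilon)$ factor. Write the rows of $\sqrt N\,A$ as $\bX_1,\dots,\bX_m$, i.i.d.\ copies of a random vector $\bX$ whose entries are bounded by some $K=O(1)$ and which satisfies $\EV{}|\langle\bX,\bx\rangle|^2=\|\bx\|_2^2$ by orthonormality. Letting $\D_s$ denote the set of $s$-sparse unit vectors, the restricted isometry constant of $\sqrt{N/m}\,A$ is
\[
\delta_s=\sup_{\bx\in\D_s}\left|\frac1m\sum_{k=1}^m\Big(|\langle\bX_k,\bx\rangle|^2-\EV{}|\langle\bX,\bx\rangle|^2\Big)\right|.
\]
First I would reduce everything to a bound on $\EV{}\delta_s$: a Bernstein-type deviation inequality for the empirical process gives $\mathbb P\big(\delta_s>\EV{}\delta_s+t\big)\le\exp(-c\,m t^2)$, so taking $t\sim\delta$ and $m\gtrsim\delta^{-2}\ln(1/\epsilon)$ pushes the tail below $\epsilon$ and accounts for the factor $\ln(1/\epsilon)/\delta^2$. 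The bulk of the work is therefore to show $\EV{}\delta_s\le\delta/2$.

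Next I would symmetrize, introducing Rademacher variables $\eps_1,\dots,\eps_m$, to obtain
\[
\EV{}\delta_s\le\frac2m\,\EV{t}\EV{\eps}\sup_{\bx\in\D_s}\left|\sum_{k=1}^m\eps_k|\langle\bX_k,\bx\rangle|^2\right|.
\]
Conditioned on the sample points, the inner object is a sub-Gaussian Rademacher process in $\bx$ for the pseudometric $d(\bx,\by)=(\sum_k(|\langle\bX_k,\bx\rangle|^2-|\langle\bX_k,\by\rangle|^2)^2)^{1/2}$, so Dudley's entropy integral controls $\EV{\eps}\sup$. The structural point is that $d$ factors as $d(\bx,\by)\lesssim\sqrt{m(1+\delta_s)}\,\|\bx-\by\|_X$ with $\|\bz\|_X=\max_k|\langle\bX_k,\bz\rangle|$; this is where the self-referential factor $\sqrt{1+\delta_s}$ enters. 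The heart of the argument is the covering-number estimate for $\D_s$ in the $X$-norm, which I would treat in two regimes: an empirical (Maurey-type) bound at coarse scales, where the boundedness $\|\bX_k\|_\infty\le K$ is used, and a volumetric bound $\ln\mathcal N\lesssim s\ln(\e N/s)+s\ln(1/u)$ at fine scales. Splitting the Dudley integral at the crossover and optimizing is precisely the step where, following \cite{CGVpp}, the logarithmic factors separate into $(\ln s)^3$ and a single $\ln N$, yielding an entropy integral of size $I_s\lesssim\sqrt s\,(\ln s)^{3/2}\sqrt{\ln N}$.

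Assembling the pieces gives a self-bounding inequality $\EV{}\delta_s\lesssim\frac{1}{\sqrt m}\,I_s\,\sqrt{1+\EV{}\delta_s}$, after taking expectation over the samples and using Jensen's inequality in the form $\EV{t}\sqrt{1+\delta_s}\le\sqrt{1+\EV{}\delta_s}$. Solving this quadratic-type inequality for $\EV{}\delta_s$ shows that $\EV{}\delta_s\le\delta/2$ as soon as $m\gtrsim\delta^{-2}I_s^2\sim\delta^{-2}s(\ln s)^3\ln N$. Combining with the concentration estimate of the first step, and collecting the residual failure probability and the lower-order crossover thresholds into the constant $N_0(\delta,\epsilon)$, produces the stated bound $m\lesssim\frac{\ln(1/\epsilon)}{\delta^2}s(\ln s)^3\ln N$.

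The main obstacle is the covering-number analysis and the bookkeeping of the Dudley integral that yields precisely $(\ln s)^3\ln N$ rather than a cruder $(\ln N)^4$ or $s(\ln s)^4$. The difficulty is twofold: choosing the adapted norm $X$ and the crossover scale so that the ambient dimension $N$ contributes only one logarithm while the sparsity carries the chaining cost; and controlling the self-referential $\sqrt{1+\delta_s}$ uniformly without losing constants, which is what makes the explicit constants and the \emph{low entropy} generalization advertised in the abstract attainable. I expect the concentration step to be comparatively routine, whereas the two-regime entropy estimate and its optimization will demand the most care.
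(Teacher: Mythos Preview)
Your outline is sound and follows the original argument in \cite{CGVpp} (and the Rudelson--Vershynin/Rauhut line before it): symmetrize, control the conditional Rademacher process via Dudley's entropy integral using a two-regime covering (Maurey at coarse scales, volumetric at fine scales), feed the result into a self-bounding inequality for $\EV{}\delta_s$, and finish with a separate Bernstein/deviation step to produce $\ln(1/\epsilon)$. That is a correct plan for the theorem as stated. Note, however, that this paper does \emph{not} supply its own proof of Theorem~\ref{cgv}; it is quoted from \cite{CGVpp} for comparison. The paper instead proves the sharper Theorem~\ref{thm:main1} via Lemma~\ref{lem:cru}, and its route differs from yours in two respects. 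First, it uses a \emph{single} covering regime---only the Maurey-type grid of Lemma~\ref{lem:c1}---and explicitly avoids the volumetric estimate. Second, rather than bounding $\EV{}\delta_s$ and then appealing to a deviation inequality, it works throughout with the moments $(\EV{}\delta_s^{2n})^{1/2n}$, replaces suprema over the nets by sums, applies Khintchine scale-by-scale, and finishes with Markov's inequality at $n\approx\ln(1/\epsilon)$; this bypasses both Dudley's inequality as a black box and your concentration step. The payoff is the \emph{additive} dependence $m\gtrsim\delta^{-2}s\big((\ln s)^3\ln N+\ln(1/\epsilon)\big)$, explicit constants, and the extension from sparse to low-entropy vectors; your route is conceptually cleaner but yields only the multiplicative $\ln(1/\epsilon)$ of the cited theorem.
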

Here $f\lesssim g$ means that there exists a constant $C>0$ such that $f\leq Cg$. In comparison we have achieved
\begin{equation}
m\gtrsim\frac{s}{\delta^2}\left((\ln s)^3\ln N+\ln\left(\frac1\epsilon\right)\right).\label{eq:optmas}
\end{equation}
In the sense that theorem \ref{cgv} is summarized in their paper, namely that the number of samples needed is of order $s(\ln s)^3\ln N$, we have not made any contribution (i.e. with regards to the asymptotics). However we think that for small $\epsilon$ the improvement is not insignificant. We do as well allow for a larger class of matrices and provide explicit constants. When constants have been presented before (for actually worse results in terms of asymptotics), as far we have seen they have been about a factor 10 larger than ours.\\
The main differences in the proofs lies in the arguments surrounding Dudley's inequality for Rademacher processes and that we do not make use of two different covering number estimates. The inequality requires a quite heavy proof, using probabilistic methods, c.f. \cite{Lihshits}. We re-use some of the arguments in that proof, but we first do pointwise estimates and then simply replace supremums with sums. One must take care when doing the covering and counting, details that we hope are perhaps a bit more clear through our exposition.


\section{Preliminaries}
\label{sec:prel}
We denote by $\|\cdot\|_p, 1\leq p<\infty$ the usual $\ell^p$ norm for vectors, $\|\mathbf{z}\|_0:=|\supp\mathbf{z}|$ denotes the cardinality of the support of a vector $\bz$ (sometimes called ''0-norm'', despite not being a norm) and $[N]=\{1,2,\dots,N\}$. In this work we will mostly restrict ourselves to vectors with real entries but one could easily generalize the results to complex vectors. 
By $\EV{X}$ we denote the expectation value with respect to a random variable, or random vector, $X$. In particular for the random sampling matrices with rows $\bX=\{X_j\}_{j=1}^m$ we will use $\EV{}$ to mean $\EV{\bX}=\EV{X_1}\EV{X_2}\cdots\EV{X_m}$ and otherwise be clear with subscripts if the expectation is taken in another random variable. Given a random variable $X$ and a measurable function $f$, we can for $1\leq p<\infty$ induce the $L^p$-norms $\|f\|_{X,p}=\EV{X}[|f(X)|^p]^{1/p}$.

\subsection{Sparsity and Restricted Isometry}
We start by defining what we mean by a sparse vector. In what follows, $N$ denotes a (usually large) positive integer.
\begin{definition}
\label{d:ssp}
$\mathbf{x}\in\C^N$ is called $s$-sparse if $\|\mathbf{x}\|_0\leq s.$
\end{definition}
The next definition will be of great use throughout this paper.
\begin{definition}
\label{d:sind}
If $\bx=(x_1,\dots,x_N), S\subset[N]$, we define $\bx_S=((x_{S})_1,\dots,(x_{S})_N)$ by $(x_S)_k=x_k\chi_S(k)$, where 
\[\chi_S(k)=\begin{cases}1, \textrm{ if }k\in S\\0, \textrm{ otherwise}\end{cases}\]
is the characteristic function of the set $S$. Clearly $\bx=\bx_S+\bx_{S^c}$, where $S^c=[N]\setminus S$.
\end{definition}
In practice one rather accepts small ''$s$-term approximation error'', i.e. one wants that the following quantity is small:
\[
\sigma_s(\mathbf{x})_p:=\inf\{\|\mathbf{x}-\mathbf{z}\|_p, \mathbf{z}\text{ is $s$-sparse}\}.
\] 
Think of $\mathbf{y}\in\C^m$ as the measured quantity from a measurement of $\mathbf{x}\in\C^N$, modelled after $\mathbf{y}=A\mathbf{x}$, where $A\in\C^{m\times N}$ is an $m\times N$-matrix and we assume that $m\ll N$.  In general this system is impossible to solve, unless we impose the extra condition that $\mathbf{x}$ is $s$-sparse and consider
\begin{equation}
\min_{\mathbf{z}\in\C^N}\|\mathbf{z}\|_0\quad\text{subject to}\quad A\mathbf{z}=\mathbf{y},\label{eq:l0min}
\end{equation}
in the hope that its solution $\mathbf{x}^*=\mathbf{x}$. This is still very hard to solve in general so one would like to consider the closest convex relaxation of \eqref{eq:l0min}, which is
\begin{equation}
\min_{\mathbf{z}\in\C^N}\|\mathbf{z}\|_1\quad\text{subject to}\quad A\mathbf{z}=\mathbf{y}.\label{eq:l1min}
\end{equation}
We ask when the solution of \eqref{eq:l1min} is equivalent to the solution of \eqref{eq:l0min}. The key notion is the so-called null space property for a matrix.
\begin{definition}
\label{d:nsp}
A matrix $A\in\C^{m\times N}$ satisfies the \emph{null space property} of order $s$ if for all subsets $S\subset[N]$ with $|S|=s$ it holds that
\begin{equation}
\|\mathbf{v}_S\|_1<\|\mathbf{v}_{S^c}\|_1\quad\text{for all }\mathbf{v}\in\ker A\setminus\{0\}.\label{eq:nsp}
\end{equation}
We write $A\in NSP(s)$.
\end{definition}
The following theorem gives the answer to when a solution of \eqref{eq:l0min} equals the solution of \eqref{eq:l1min}, for the proof see for example \cite{Rau1} (Theorem 2.3, p.8) or \cite{GrNi}.
\begin{theorem}
\label{Rau1:T2.3}
Let $A\in\C^{m\times N}$. Then every $s$-sparse vector $\mathbf{x}\in\C^N$ is the unique solution to the $\ell^1$-minimization problem \eqref{eq:l1min} with $\mathbf{y}=A\mathbf{x}$ if and only if $A$ satisfies the null space property of order $s$.
\end{theorem}
\noindent Below we present a helpful proposition that can be used to verify the null space property. The proof is a simple consequence of Lemma \ref{lem:l21} in the appendix where we sketch out the details. With a slightly more involved proof the propostion could be improved a bit further, replacing the constant $4/5$ with a constant arbitrarily close to (for large $s$)  $\sqrt{4/5}$. See further section \ref{sec:improvedelta}.
\begin{proposition}\label{prop:nsp2}
Assume $\mathbf{x}=(x_1,\dots,x_N)\in\C^N$ such that $|x_1|\geq|x_2|\geq\dots\geq|x_N|$. Write $\bx=\sum_k\bx_{S_k}$ where $S_1=\{1,\dots,s\},S_2=\{s+1,\dots,2s\}$ etc. so that $|S_k|=s$ (except for possibly the last $k$). Denote by $S^c=[N]\setminus S$. Then if 
\[
\|\bx_{S_1}\|_2<\frac45\sum_{k>1}\|\bx_{S_k}\|_2,
\]
it holds that $\|\bx_S\|_1<\|\bx_{S^c}\|_1$ for all subsets $S\subset[N]$ with $|S|=s$.
\end{proposition}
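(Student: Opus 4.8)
The plan is to reduce the statement, which is quantified over all size-$s$ sets $S$, to the single extremal set $S_1$, and then to turn the $\ell^2$-hypothesis into the desired $\ell^1$-inequality by pairing Cauchy--Schwarz on the head block $\bx_{S_1}$ with a sharp sorting estimate for the tail blocks.

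First I would note that, since $|x_1|\geq\dots\geq|x_N|$, the block $S_1=\{1,\dots,s\}$ both maximises $\|\bx_S\|_1$ and minimises $\|\bx_{S^c}\|_1$ among all $S$ with $|S|=s$; concretely $\|\bx_S\|_1\leq\|\bx_{S_1}\|_1$ and $\|\bx_{S^c}\|_1\geq\|\bx_{S_1^c}\|_1$. It therefore suffices to establish the single inequality $\|\bx_{S_1}\|_1<\|\bx_{S_1^c}\|_1=\sum_{k>1}\|\bx_{S_k}\|_1$, since then $\|\bx_S\|_1\leq\|\bx_{S_1}\|_1<\|\bx_{S_1^c}\|_1\leq\|\bx_{S^c}\|_1$ for every admissible $S$.

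I would prove this reduced inequality by contraposition: supposing $\|\bx_{S_1}\|_1\geq\sum_{k>1}\|\bx_{S_k}\|_1$, I aim to deduce $\|\bx_{S_1}\|_2\geq\frac45\sum_{k>1}\|\bx_{S_k}\|_2$, which negates the hypothesis. The head is handled by Cauchy--Schwarz, $\|\bx_{S_1}\|_1\leq\sqrt s\,\|\bx_{S_1}\|_2$. For the tail the basic tool is the sorting estimate $\|\bx_{S_{k+1}}\|_2\leq\frac{1}{\sqrt s}\|\bx_{S_k}\|_1$, which holds because each $j\in S_{k+1}$ obeys $|x_j|\leq\min_{i\in S_k}|x_i|\leq\frac1s\|\bx_{S_k}\|_1$; Lemma \ref{lem:l21} is the sharpened form of this that I would quote to close the gap between the two sides.

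The main obstacle is the constant. Summing the sorting estimate naively gives $\sum_{k>1}\|\bx_{S_k}\|_2\leq\frac{1}{\sqrt s}\sum_{k\geq1}\|\bx_{S_k}\|_1=\frac{1}{\sqrt s}\bigl(\|\bx_{S_1}\|_1+\sum_{k>1}\|\bx_{S_k}\|_1\bigr)$, which charges the first tail block $\bx_{S_2}$ against the large head norm $\|\bx_{S_1}\|_1$ and thereby loses a whole block; run through the Cauchy--Schwarz step this only yields $\|\bx_{S_1}\|_1<4\sum_{k>1}\|\bx_{S_k}\|_1$, i.e.\ it proves the proposition only with $4/5$ weakened to $1/2$. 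Reaching $4/5$ forces one to keep the head--tail coupling tight --- in particular to treat the first tail block separately, using also the trivial bound $\|\bx_{S_2}\|_2\leq\|\bx_{S_2}\|_1$ rather than charging $S_2$ against the head --- and this refined bookkeeping is exactly what Lemma \ref{lem:l21} supplies. I expect this quantitative balancing to be the delicate point; that the extremisers are staircase-type rather than flat vectors is what the remark alludes to when it says $4/5$ can be pushed towards $\sqrt{4/5}$ for large $s$ (cf.\ section \ref{sec:improvedelta}).
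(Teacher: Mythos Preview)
Your reduction to $S=S_1$ and the use of Cauchy--Schwarz on the head are exactly what the paper does, and you correctly identify Lemma~\ref{lem:l21} as the tail tool. However, your description of what that lemma actually provides is off: it does \emph{not} ``treat the first tail block separately'' via the trivial bound $\|\bx_{S_2}\|_2\le\|\bx_{S_2}\|_1$. The lemma instead applies the CWX inequality (Lemma~\ref{lem:CWX2}) to each tail block, bounding $\|\bx_{S_k}\|_2$ by $\frac{1}{\sqrt s}\|\bx_{S_k}\|_1$ (the \emph{same} block, not the previous one) plus a correction $\frac{\sqrt s}{4}(\max-\min)$ that telescopes; the result is $\sum_{k>1}\|\bx_{S_k}\|_2\le \frac{1}{\sqrt s}\|\bx_{S_1^c}\|_1+\frac{\sqrt s}{4}|x_{s+1}|$. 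What then closes the argument, and what your sketch is missing, is the one-line estimate $|x_{s+1}|\le\frac{1}{s}\|\bx_{S_1}\|_1$, which turns the extra term into $\frac{1}{4\sqrt s}\|\bx_{S_1}\|_1$. Combined with $\|\bx_{S_1}\|_1\le\sqrt s\,\|\bx_{S_1}\|_2$ and the hypothesis, this gives
\[
\tfrac{1}{\sqrt s}\|\bx_{S_1}\|_1\le\|\bx_{S_1}\|_2<\tfrac45\Bigl(\tfrac{1}{\sqrt s}\|\bx_{S_1^c}\|_1+\tfrac{1}{4\sqrt s}\|\bx_{S_1}\|_1\Bigr),
\]
whence $\|\bx_{S_1}\|_1<\|\bx_{S_1^c}\|_1$ directly (your contrapositive framing is of course equivalent). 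So the route is right, but the mechanism by which the constant $4/5$ emerges is the CWX-with-telescoping refinement, not a separate handling of $S_2$.
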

Unfortunately, the null space property is often hard to verify. Instead one usually tries to verify the weaker restricted isometry property for a matrix.
\begin{definition}
\label{d:rip}
The restricted isometry constants $\delta_s$ of a matrix $A\in\C^{m\times N}$ is defined as the smallest $\delta_s$ such that
\begin{equation}
(1-\delta_s)\|\mathbf{x}\|_2^2\leq\|A\mathbf{x}\|_2^2\leq(1+\delta_s)\|\mathbf{x}\|_2^2 \label{eq:rip}
\end{equation}
for all $s$-sparse $\mathbf{x}\in\C^N$. We abbreviate this by $A\in RIP(\delta_s)$.
\end{definition}
Another characterization of the restricted isometry constants is given by:
\begin{proposition}[\cite{Rau1}:2.5 (p.9)]
\label{Rau1:P2.5}
Let $A\in\C^{m\times N}$, with restricted isometry constants $\delta_s$, then
\[
\delta_s=\sup_{\mathbf{x}\in T_s}|\langle(A^*A-I)\mathbf{x},\mathbf{x}\rangle|,
\textrm{ where } \ T_s=\{\mathbf{x}\in\C^N, \|\mathbf{x}\|_2=1,\|\mathbf{x}\|_0\leq s\}.\]
\end{proposition}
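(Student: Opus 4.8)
The plan is to translate both inequalities in Definition~\ref{d:rip} into a single statement about the self-adjoint operator $A^*A-I$ and then observe that the smallest admissible constant is exactly the claimed supremum. First I would rewrite the middle term of \eqref{eq:rip} as a quadratic form: since $\|\mathbf{x}\|_2^2=\langle\mathbf{x},\mathbf{x}\rangle$ and $\|A\mathbf{x}\|_2^2=\langle A\mathbf{x},A\mathbf{x}\rangle=\langle A^*A\mathbf{x},\mathbf{x}\rangle$, subtracting $\|\mathbf{x}\|_2^2$ throughout \eqref{eq:rip} turns the two-sided bound into
\[
-\delta_s\|\mathbf{x}\|_2^2\leq\langle(A^*A-I)\mathbf{x},\mathbf{x}\rangle\leq\delta_s\|\mathbf{x}\|_2^2,
\]
which for every $s$-sparse $\mathbf{x}$ is equivalent to $|\langle(A^*A-I)\mathbf{x},\mathbf{x}\rangle|\leq\delta_s\|\mathbf{x}\|_2^2$. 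Here $A^*A-I$ is Hermitian, so the inner product is real and the absolute value is meaningful.

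The second step is a homogeneity argument that reduces the class of test vectors to $T_s$. Both sides of $|\langle(A^*A-I)\mathbf{x},\mathbf{x}\rangle|\leq\delta\|\mathbf{x}\|_2^2$ are homogeneous of degree two in $\mathbf{x}$, and normalizing a nonzero $s$-sparse vector to $\mathbf{x}/\|\mathbf{x}\|_2$ preserves its support and hence its sparsity. Therefore the bound holds for all $s$-sparse $\mathbf{x}$ if and only if it holds for all unit-norm $s$-sparse vectors, i.e. for all $\mathbf{x}\in T_s$, where it reads simply $|\langle(A^*A-I)\mathbf{x},\mathbf{x}\rangle|\leq\delta$.

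Finally I would identify the least valid constant with the supremum. The condition that $|\langle(A^*A-I)\mathbf{x},\mathbf{x}\rangle|\leq\delta$ for all $\mathbf{x}\in T_s$ is precisely the requirement $\delta\geq\sup_{\mathbf{x}\in T_s}|\langle(A^*A-I)\mathbf{x},\mathbf{x}\rangle|$, so the smallest such $\delta$ — which by Definition~\ref{d:rip} is $\delta_s$ — equals that supremum, giving the claim. I do not expect a genuine obstacle here; the only points requiring a little care are the legitimacy of restricting to unit vectors (handled by the homogeneity above) and the passage from ``smallest $\delta$ that works for every vector'' to an exact supremum. It is worth remarking that $T_s$ is a finite union of unit spheres of the coordinate subspaces of dimension at most $s$, hence compact, so the supremum is in fact attained as a maximum — reassuring, though not strictly needed for the argument.
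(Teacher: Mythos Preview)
Your argument is correct and is exactly the standard one: rewrite \eqref{eq:rip} as a two-sided bound on the Hermitian form $\langle(A^*A-I)\mathbf{x},\mathbf{x}\rangle$, use degree-two homogeneity to restrict to unit $s$-sparse vectors, and identify the minimal admissible constant with the supremum over $T_s$. The paper itself does not supply a proof of this proposition; it is simply quoted from \cite{Rau1} (Proposition~2.5), so there is nothing to compare against beyond noting that your reasoning matches the usual textbook derivation.
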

The restricted isometry property can, under some extra condition, imply the null space property as the following theorem suggests.
\begin{theorem}
\label{thm:best_delta}
Suppose the restricted isometry constants $\delta_{2s}$ of a matrix $A\in\C^{m\times N}$ satisfies
\[\delta_{2s}<\frac4{\sqrt{41}}\approx0.62,\]
then the null space property of order $s$ is satisfied. In particular, every $s$-sparse vector $\mathbf{x}\in\C^N$ is recovered by $\ell^1$-minimization.
\end{theorem}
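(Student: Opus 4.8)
The plan is to deduce the null space property from the hypothesis by feeding a quantitative RIP estimate into Proposition~\ref{prop:nsp2}. By Theorem~\ref{Rau1:T2.3}, recovery of every $s$-sparse vector via \eqref{eq:l1min} is equivalent to $A\in NSP(s)$, so I fix an arbitrary $\mathbf{v}\in\ker A\setminus\{0\}$, relabel its coordinates so that $|v_1|\ge|v_2|\ge\cdots\ge|v_N|$, and split the support into consecutive blocks $S_1,S_2,\dots$ of size $s$ exactly as in Proposition~\ref{prop:nsp2}. By that proposition it then suffices to establish the single scalar inequality $\|\mathbf{v}_{S_1}\|_2<\tfrac45\sum_{k>1}\|\mathbf{v}_{S_k}\|_2$, after which the desired conclusion $\|\mathbf{v}_S\|_1<\|\mathbf{v}_{S^c}\|_1$ holds for every $S$ with $|S|=s$, and hence $A\in NSP(s)$.

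The core of the argument is a pointwise RIP estimate bounding the top block by the tail. Since $A\mathbf{v}=0$ I may write $A\mathbf{v}_{S_1}=-\sum_{k\ge2}A\mathbf{v}_{S_k}$ and expand $\|A\mathbf{v}_{S_1}\|_2^2=-\sum_{k\ge2}\langle A\mathbf{v}_{S_1},A\mathbf{v}_{S_k}\rangle$. Each pair $\mathbf{v}_{S_1},\mathbf{v}_{S_k}$ has disjoint supports whose union has size $2s$, so the characterization in Proposition~\ref{Rau1:P2.5} gives $|\langle A\mathbf{v}_{S_1},A\mathbf{v}_{S_k}\rangle|\le\delta_{2s}\|\mathbf{v}_{S_1}\|_2\|\mathbf{v}_{S_k}\|_2$, while the lower RIP bound gives $\|A\mathbf{v}_{S_1}\|_2^2\ge(1-\delta_{2s})\|\mathbf{v}_{S_1}\|_2^2$. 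Combining these term by term only yields the elementary estimate $\|\mathbf{v}_{S_1}\|_2\le\frac{\delta_{2s}}{1-\delta_{2s}}\sum_{k\ge2}\|\mathbf{v}_{S_k}\|_2$, which would force $\delta_{2s}<4/9$. To reach the stated threshold I instead aim for the sharper bound
\[
\|\mathbf{v}_{S_1}\|_2\le\frac{\delta_{2s}}{\sqrt{1-\delta_{2s}^2}}\sum_{k\ge2}\|\mathbf{v}_{S_k}\|_2,
\]
in which the factor $1-\delta_{2s}$ is upgraded to the geometric mean $\sqrt{(1-\delta_{2s})(1+\delta_{2s})}$. This refinement is where Lemma~\ref{lem:l21} from the appendix is meant to enter: rather than estimating the lower RIP constant and the cross terms separately, one exploits the joint spectral constraint that the RIP places on the $2s$-sparse pairs simultaneously, recovering the cancellation that the crude triangle-inequality bound discards.

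With the sharp estimate in hand the theorem follows by elementary algebra, since for $\delta_{2s}\in(0,1)$ one has
\[
\frac{\delta_{2s}}{\sqrt{1-\delta_{2s}^2}}<\frac45\iff\delta_{2s}^2<\tfrac{16}{25}\bigl(1-\delta_{2s}^2\bigr)\iff 41\,\delta_{2s}^2<16\iff\delta_{2s}<\frac4{\sqrt{41}},
\]
so the hypothesis of Proposition~\ref{prop:nsp2} is verified for every nonzero kernel vector, giving $A\in NSP(s)$ and the claimed $\ell^1$-recovery. I expect the main obstacle to be precisely the sharp factor $\sqrt{1-\delta_{2s}^2}$: the naive bound loses it and stalls at $4/9$, so establishing the refined joint estimate is the real work, and getting it exactly right is what pins the constant to $4/\sqrt{41}$. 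The same final computation also explains the improvement flagged in the text, since replacing the constant $4/5$ in Proposition~\ref{prop:nsp2} by $\sqrt{4/5}$ turns the condition into $\delta_{2s}^2<\tfrac45(1-\delta_{2s}^2)$, i.e.\ $\delta_{2s}<2/3$.
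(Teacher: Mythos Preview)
Your overall outline matches the paper's: decompose a nonzero kernel vector into consecutive $s$-blocks, use $A\mathbf{v}=0$ to bound $\|\mathbf{v}_{S_1}\|_2$ by the tail $\sum_{k>1}\|\mathbf{v}_{S_k}\|_2$, and then feed this into Proposition~\ref{prop:nsp2} to conclude the null space property. The target inequality
\[
\|\mathbf{v}_{S_1}\|_2\le\frac{\delta_{2s}}{\sqrt{1-\delta_{2s}^2}}\sum_{k>1}\|\mathbf{v}_{S_k}\|_2
\]
is exactly what the paper arrives at, and your closing algebra deriving $4/\sqrt{41}$ (and the $2/3$ variant) is correct.

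The genuine gap is that the step carrying all the weight---upgrading the naive factor $\delta_{2s}/(1-\delta_{2s})$ to $\delta_{2s}/\sqrt{1-\delta_{2s}^2}$---is only announced, not proved, and you point to the wrong lemma for it. Lemma~\ref{lem:l21} is a purely combinatorial $\ell^1/\ell^2$ block estimate; it is what underlies Proposition~\ref{prop:nsp2}, which you have already invoked, and it contains no RIP information whatsoever. The paper's actual device is Lemma~\ref{lem:tlem}: one records the \emph{specific} position of $\mathbf{v}_{S_1}$ within the RIP interval by writing $\|A\mathbf{v}_{S_1}\|_2^2-\|\mathbf{v}_{S_1}\|_2^2=t\delta_{2s}\|\mathbf{v}_{S_1}\|_2^2$ for some $|t|\le1$, and then a two-parameter optimization over test vectors $\alpha\mathbf{v}_{S_1}\pm\mathbf{v}_{S_k}$ yields the refined cross-term bound $|\langle A\mathbf{v}_{S_1},A\mathbf{v}_{S_k}\rangle|\le\delta_{2s}\sqrt{1-t^2}\,\|\mathbf{v}_{S_1}\|_2\|\mathbf{v}_{S_k}\|_2$. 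The point is that the lower bound and the cross terms are now coupled through the \emph{same} parameter $t$, giving
\[
\|\mathbf{v}_{S_1}\|_2\le\frac{\delta_{2s}\sqrt{1-t^2}}{1-t\delta_{2s}}\sum_{k>1}\|\mathbf{v}_{S_k}\|_2,
\]
and maximizing the coefficient over $t\in[-1,1]$ (the maximum occurs at $t=\delta_{2s}$) produces $\delta_{2s}/\sqrt{1-\delta_{2s}^2}$. Without this coupling argument or an equivalent, the proof stalls at $\delta_{2s}<4/9$, exactly as you diagnosed; so the proposal is a correct roadmap but omits the one nontrivial ingredient.
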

This is an improvement of the best known result, from \cite{MLric}, which had $\delta_{2s}<0.4931$ (see also \cite{Fou1},\cite{CWX1},\cite{CWX2}). The proof will be included in the appendix. With some more work the authors can replace the constant $4/\sqrt{41}$ with a constant, arbitrarily close to for large $s$, 2/3. The key ingredient is the mentioned improvement of proposition \ref{prop:nsp2}. See further section \ref{sec:improvedelta}. The best we can hope for is to replace the constant with $1/\sqrt 2$, due to the work in \cite{DaGr}. \\

\subsection{Entropy and Low Entropy Isometry}
Next we will define the \emph{$\ell^1$-entropy} (also known as the \emph{$\ell^1$-sparsity level}, as defined in for example \cite{Tang}) which is closely related to sparseness.

\begin{definition}
By the $\ell^1$-entropy of a nonzero vector $\bx\in\R^n$ we mean the quantity
\[
\Ent(\bx)=\frac{\|\bx\|_1^2}{\|\bx\|_2^2}.
\]
\end{definition}

\begin{remark}
Clearly if $\bx$ is $s$-sparse then $\Ent(\bx)\leq s$ by Cauchy-Schwarz inequality.
\end{remark}
In replacement of null space property,  one has the null entropy property.
\begin{definition}
A matrix $A\in\C^{m\times N}$ satisfies the \emph{null entropy property of order $t$} if for every $\bx\in\ker A\setminus\{\mathbf{0}\}$ it holds that $\Ent(\bx)\geq t$. We write $A\in NEP(t)$.
\end{definition}
A low entropy isometry property can be defined as well, analogous with the restricted isometry property.
\begin{definition}
A matrix $A\in\C^{m\times N}$ satisfies the \emph{low entropy isometry property} with constants $\tilde\delta_t$ if for all $\bx$ with $\Ent(\bx)\leq t$, 
\[
|\|A\bx\|_2^2-\|\bx\|_2^2|\leq\tilde\delta_t\|\bx\|_2^2.
\]
We abbreviate this by $A\in LEIP(\tilde\delta_t)$.
\end{definition}

Many of the above notions are related by the following proposition:
\begin{proposition}\
\label{prop1}
\begin{enumerate}
\item If $t>4s$ and $A\in NEP(t)$ then $A\in NSP(s)$.
\item If $\tilde\delta_t<1$ and $A\in LEIP(\tilde\delta_t)$, then $A\in NEP(t)$.
\item If $s\leq t$ and $A\in LEIP(\tilde\delta_t)$, then $A\in RIP(\delta_s)$ for some $\delta_s\leq\tilde\delta_t$.
\end{enumerate}
\end{proposition}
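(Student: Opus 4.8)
The three statements are essentially independent and each follows by unwinding the relevant definitions; only the first carries genuine combinatorial content. The plan is to prove item 3 first, then item 2, and finally item 1.

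For item 3, I would start from an arbitrary $s$-sparse $\bx$. If $\bx=\mathbf{0}$ the restricted isometry inequality \eqref{eq:rip} holds trivially, so I may assume $\bx\neq\mathbf{0}$ (note that $\Ent$ is only defined for nonzero vectors, so the degenerate case must be disposed of separately). By the Remark following the definition of $\ell^1$-entropy, Cauchy--Schwarz gives $\Ent(\bx)\leq s\leq t$, so $\bx$ is admissible in the definition of $LEIP(\tilde\delta_t)$. The bound $|\|A\bx\|_2^2-\|\bx\|_2^2|\leq\tilde\delta_t\|\bx\|_2^2$ then rearranges to $(1-\tilde\delta_t)\|\bx\|_2^2\leq\|A\bx\|_2^2\leq(1+\tilde\delta_t)\|\bx\|_2^2$, which is precisely \eqref{eq:rip} with $\delta=\tilde\delta_t$. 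Since this holds for every $s$-sparse $\bx$ and $\delta_s$ is by Definition \ref{d:rip} the smallest admissible constant, I conclude $\delta_s\leq\tilde\delta_t$.

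Item 2 I would prove by contraposition. If $A\notin NEP(t)$, there is a nonzero $\bx\in\ker A$ with $\Ent(\bx)<t$, hence in particular $\Ent(\bx)\leq t$, so $LEIP(\tilde\delta_t)$ applies to $\bx$. But $A\bx=\mathbf{0}$ forces $\|A\bx\|_2^2=0$, and the low entropy bound then reads $\|\bx\|_2^2\leq\tilde\delta_t\|\bx\|_2^2$; dividing by $\|\bx\|_2^2>0$ yields $1\leq\tilde\delta_t$, contradicting $\tilde\delta_t<1$. Therefore every nonzero kernel vector has entropy at least $t$, i.e. $A\in NEP(t)$.

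Item 1 is the substantive one. Fix $\mathbf{v}\in\ker A\setminus\{\mathbf{0}\}$ and a set $S\subset[N]$ with $|S|=s$. Since $A\in NEP(t)$ and $t>4s$, we have $\|\mathbf{v}\|_1^2/\|\mathbf{v}\|_2^2=\Ent(\mathbf{v})\geq t>4s$, i.e. the strict inequality $\|\mathbf{v}\|_1>2\sqrt{s}\,\|\mathbf{v}\|_2$. On the other hand, Cauchy--Schwarz on the $s$ coordinates of $S$ gives $\|\mathbf{v}_S\|_1\leq\sqrt{s}\,\|\mathbf{v}_S\|_2\leq\sqrt{s}\,\|\mathbf{v}\|_2$. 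Combining these, $2\|\mathbf{v}_S\|_1\leq2\sqrt{s}\,\|\mathbf{v}\|_2<\|\mathbf{v}\|_1=\|\mathbf{v}_S\|_1+\|\mathbf{v}_{S^c}\|_1$, whence $\|\mathbf{v}_S\|_1<\|\mathbf{v}_{S^c}\|_1$. As $S$ and $\mathbf{v}$ were arbitrary, $A\in NSP(s)$. Here the exponent $4=2^2$ in the hypothesis $t>4s$ is exactly what is needed to absorb the factor $2$ produced by splitting $\|\mathbf{v}\|_1$, and the strictness of the hypothesis (rather than $t\geq 4s$) is what keeps the final inequality strict, as required by Definition \ref{d:nsp}. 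I do not anticipate any serious obstacle: the only points requiring care are preserving strictness in item 1 and handling the degenerate vector $\bx=\mathbf{0}$ in item 3.
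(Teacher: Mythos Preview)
Your proposal is correct and matches the paper's intended approach: the paper itself gives no detailed argument, merely remarking that items 1 and 2 ``can be proved on a single line by considering the contrapositive statements while [item 3] is obvious.'' Your treatment of item 3 is exactly this obvious inclusion, your item 2 is literally the contrapositive, and your direct proof of item 1 is the same Cauchy--Schwarz computation as the contrapositive version (just read in the other direction).
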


A variant of $\it1$ can be found in \cite{Tang}, and both that and $\it2$ can be proved on a single line by considering the contrapositive statements while $\it3$ is obvious.

\subsection{Bounded orthonormal systems}

Let $\mathcal{D}\subset\R^d$, $\nu$ a probability measure on $\mathcal{D}$, $\{\psi_j\}_{j=1}^N$ a bounded orthonormal system of complex-valued functions on $\mathcal{D}$. This means that for $j,k\in[N]$,
\begin{equation}
\label{eq:ON}
\int_\mathcal{D}\psi_j(t)\overline{\psi_k(t)}d\nu(t)=\delta_{jk},
\end{equation}
and $\{\psi_j\}$ is uniformly bounded in $L^\infty$, 
\begin{equation}
\label{eq:ubON}
\|\psi_j\|_\infty=\sup_\mathcal{D}|\psi_j(t)|\leq K\quad\textrm{for all }j\in[N], (K\geq1).
\end{equation}
Let now $t_1\dots t_m\in\D$ (picked independently at random with respect to $\nu$) and suppose we are given sample values
\[
y_l=f(t_l)=\sum_{k=1}^Nx_k\psi_j(t_l),\quad l=1,\dots,m.
\]
Introduce $A\in\C^{m\times N}, A = (a_{lk}), a_{lk}=\psi_k(t_l), l = 1,\dots,m; k=1,\dots,N.$ Then $\mathbf{y}=A\mathbf{x}, \mathbf{y}=(y_1,\dots,y_m)^T$ and $\mathbf{x}$ is a vector of coefficients. We wish to reconstruct the polynomial $f$ (or equivalently $\mathbf{x}$) from the samples $\mathbf{y}$, using as few samples as possible. If we assume that $f$ is $s$-sparse (defined to be so if $\mathbf{x}$ is $s$-sparse) the problem reduces to solving $\mathbf{y}=A\mathbf{x}$ with a sparsity constraint. $P(t_l\in B)=\nu(B)$ for measurable $B\subset\D$, so $A$ becomes a random sampling matrix (fulfills \eqref{eq:ON},\eqref{eq:ubON} and $t_l$ are picked independently at random with respect to $\nu$). One interesting example is given by sampling $m$ rows from the $N\times N$-matrix
\[
a_{lk}=\frac{\e^{2\pi\i lk/N}}{\sqrt N}, \ l,k\in[N].
\]
This matrix is called a \emph{random partial Fourier matrix}. We summarize this section with a definition of the matrices we will continue to study.
\begin{definition}[Random Sampling Matrix]
A matrix $A\in\C^{m\times N}$ is said to be a \emph{random sampling matrix} if its rows $\bX=\{X_j\}_{j=1}^m$ fulfills the conditions:
\begin{enumerate}
\item $\|X_j\|_\infty\leq K$ for some $K\geq1$.
\item $\EV{}[X_j^*X_j]=I_N$ ($N\times N$ identity matrix), for all $j$.
\end{enumerate}
\end{definition}

\section{Preparatory lemmas and inequalities}\label{sec:lem}
We move on to present some key ingredients to be used in the proof of the main theorem of this paper. First we remind about the definition of a Rademacher sequence.
\begin{definition}
A \emph{Rademacher sequence} $\boldsymbol\eps=(\eps_j)_{j=1}^m$ is a random vector whose components $\eps_j$ takes the values $\pm1$ with equal probability ($=\frac12$).
\end{definition}
Symmetrization is a useful technique that will later be used to bound the expectation value of the restricted isometry constants $\delta_s$. The proof of the proposition is not very hard and can be found in for example \cite{LTProb} or \cite{Rau1}.
\begin{proposition}[Symmetrization]
\label{prop:sym}
Assume that $\boldsymbol\xi=(\xi_j)_{j=1}^m$ is a sequence of independent random vectors in $\C^N$ equipped with a (semi-) norm $\|\cdot\|$, having expectations $x_j=\EV{}\xi_j$. Then for $1\leq p<\infty$
\[
\left(\EV{}\|\sum_{j=1}^m(\xi_j-x_j)\|^p\right)^{1/p}\leq2\left(\EV{}\|\sum_{j=1}^m\eps_j\xi_j\|^p\right)^{1/p}
\] 
where $\boldsymbol\eps=(\eps_j)_{j=1}^m$ is a Rademacher sequence independent of $\boldsymbol\xi$.
\end{proposition}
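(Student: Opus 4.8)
The plan is to prove this by the classical ``ghost sample'' device: one introduces an independent copy of the sequence to manufacture a symmetric random vector, to which Rademacher signs can be attached without changing its distribution. Concretely, I would let $\boldsymbol\xi'=(\xi_j')_{j=1}^m$ be an independent copy of $\boldsymbol\xi$, so that the $\xi_j'$ are independent of each other and of $\boldsymbol\xi$, with $\EV{}\xi_j'=x_j$. The first step is to write $x_j=\EV{\boldsymbol\xi'}\xi_j'$ and pull this inner expectation out through the norm. Since $t\mapsto\|t\|^p$ is convex for $p\geq1$ (the norm is convex and $u\mapsto u^p$ is convex and increasing on $[0,\infty)$), Jensen's inequality gives
\[
\EV{\boldsymbol\xi}\Big\|\sum_{j=1}^m(\xi_j-x_j)\Big\|^p=\EV{\boldsymbol\xi}\Big\|\EV{\boldsymbol\xi'}\sum_{j=1}^m(\xi_j-\xi_j')\Big\|^p\leq\EV{\boldsymbol\xi}\EV{\boldsymbol\xi'}\Big\|\sum_{j=1}^m(\xi_j-\xi_j')\Big\|^p.
\]

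Next I would exploit the symmetry this has created. Because $\xi_j$ and $\xi_j'$ are identically distributed and independent, each difference $\xi_j-\xi_j'$ is a symmetric random vector, and since these differences are independent across $j$, the joint law of $(\xi_j-\xi_j')_j$ is invariant under flipping the sign of any subset of coordinates. Introducing a Rademacher sequence $\boldsymbol\eps$ independent of $\boldsymbol\xi,\boldsymbol\xi'$, the vector $(\eps_j(\xi_j-\xi_j'))_j$ therefore has the same distribution as $(\xi_j-\xi_j')_j$, whence
\[
\EV{}\Big\|\sum_{j=1}^m(\xi_j-\xi_j')\Big\|^p=\EV{}\Big\|\sum_{j=1}^m\eps_j(\xi_j-\xi_j')\Big\|^p.
\]

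Finally I would split and re-symmetrize. Applying Minkowski's inequality in $L^p$ to the random variables $\|\sum_j\eps_j\xi_j\|$ and $\|\sum_j\eps_j\xi_j'\|$ yields
\[
\Big(\EV{}\Big\|\sum_{j=1}^m\eps_j(\xi_j-\xi_j')\Big\|^p\Big)^{1/p}\leq\Big(\EV{}\Big\|\sum_{j=1}^m\eps_j\xi_j\Big\|^p\Big)^{1/p}+\Big(\EV{}\Big\|\sum_{j=1}^m\eps_j\xi_j'\Big\|^p\Big)^{1/p},
\]
and because $\boldsymbol\xi'$ is distributed as $\boldsymbol\xi$, the two terms on the right coincide, producing the factor $2$. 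Chaining the three displays and taking $p$-th roots gives the assertion.

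The Jensen and Minkowski steps are routine convexity estimates; the one point requiring care is the symmetrization identity, where I must check that attaching the signs $\eps_j$ genuinely preserves the distribution. This rests on the coordinatewise symmetry of each $\xi_j-\xi_j'$ together with their independence across $j$, which lets the sign flips be performed independently in each coordinate, while the independence of $\boldsymbol\eps$ from $\boldsymbol\xi,\boldsymbol\xi'$ is what permits conditioning on $\boldsymbol\eps$ and then averaging. I expect this distributional bookkeeping to be the only genuine obstacle, the remainder being standard.
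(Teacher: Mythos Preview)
Your argument is the standard ghost-sample symmetrization proof and is correct; the paper does not give its own proof but refers to \cite{LTProb} and \cite{Rau1}, which is precisely this argument. There is nothing to add.
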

Khintchine's inequality is another important inequality to be used later on.
\begin{proposition}[Khintchine's inequality]
\label{prop:khin}
Suppose $\bx=(x_1,\dots,x_N)\in\C^N$ and $\boldsymbol\eps=(\eps_1,\dots,\eps_N)$ is a vector whose components are independent Rademacher random variables, then for $p\geq2$
\begin{equation}\label{eq:khin2}
\EV{\boldsymbol\eps}\left|\sum_{j=1}^N\eps_jx_j\right|^{p}\leq2^{3/4}\left(\frac{p}{\e}\right)^{p/2}\|\bx\|_2^{p}.
\end{equation}
\end{proposition}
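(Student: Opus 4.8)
The plan is to prove the sharp bound for even integer exponents first, where a clean comparison with Gaussians is available, and then to fill in the non-even exponents by interpolation; the constant $2^{3/4}$ is then whatever is needed to absorb the Stirling and interpolation corrections. I would deliberately avoid routing through the sub-Gaussian moment generating function $\prod_j\cosh(\lambda x_j)\le\e^{\lambda^2\|\bx\|_2^2/2}$ and a tail estimate, since converting a Laplace bound into moments by Markov's inequality loses a spurious factor $\sqrt p$ and cannot reach the advertised $(p/\e)^{p/2}$ with a $p$-independent constant. Instead, writing $S=\sum_j\eps_jx_j$ and treating real $x_j$ with $p=2k$, I would expand
\[
\EV{\boldsymbol\eps}S^{2k}=\sum_{j_1,\dots,j_{2k}}x_{j_1}\cdots x_{j_{2k}}\,\EV{\boldsymbol\eps}[\eps_{j_1}\cdots\eps_{j_{2k}}],
\]
where the expectation equals $1$ when every index value occurs an even number of times and $0$ otherwise. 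The decisive observation is that on each surviving term the monomial $x_{j_1}\cdots x_{j_{2k}}$ is a product of even powers of real numbers, hence nonnegative, while the corresponding mixed moment of independent standard Gaussians $g_j$ is a product of double factorials, hence $\ge1$, on exactly the same index patterns. Comparing term by term yields $\EV{\boldsymbol\eps}S^{2k}\le\EV{\mathbf g}\big(\sum_j g_jx_j\big)^{2k}=(2k-1)!!\,\|\bx\|_2^{2k}$, since $\sum_jg_jx_j$ is centred Gaussian of variance $\|\bx\|_2^2$.

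For complex $x_j=a_j+\i b_j$ I would not repeat the combinatorics but reduce to the real estimate: writing $|S|^2=(\Re S)^2+(\Im S)^2$ and expanding by the binomial theorem, then applying H\"older's inequality to each mixed term, $\EV{\boldsymbol\eps}[(\Re S)^{2j}(\Im S)^{2(k-j)}]\le(\EV{\boldsymbol\eps}(\Re S)^{2k})^{j/k}(\EV{\boldsymbol\eps}(\Im S)^{2k})^{(k-j)/k}$, and the real bound to $\Re S=\sum_j\eps_ja_j$ and $\Im S=\sum_j\eps_jb_j$, collapses the binomial sum back to $(2k-1)!!(\sum_ja_j^2+\sum_jb_j^2)^k=(2k-1)!!\,\|\bx\|_2^{2k}$. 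Thus $\EV{\boldsymbol\eps}|S|^{2k}\le(2k-1)!!\,\|\bx\|_2^{2k}$ in both cases. Stirling's bounds then give $(2k-1)!!=(2k)!/(2^kk!)\le\sqrt2\,\e^{1/(24k)}(2k/\e)^k$, and since $\sqrt2\,\e^{1/24}<2^{3/4}$ this already establishes the claimed inequality for every even exponent $p=2k$.

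The remaining and, I expect, most delicate step is to pass from even integers to arbitrary $p\ge2$ while keeping the constant at $2^{3/4}$. The function $p\mapsto\ln\EV{\boldsymbol\eps}|S|^p$ is convex (Lyapunov's inequality), so on $[2k,2k+2]$ it is dominated by the chord through its endpoint values, where the even-moment estimates apply. The catch is that the target exponent $\phi(p)=\tfrac p2(\ln p-1)$ is itself convex ($\phi''(p)=1/(2p)$), so the chord of $\phi$ lies above $\phi(p)$ and the naive interpolation points the wrong way; one must therefore bound the chord-to-curve gap quantitatively. Using $\max_{[a,b]}(\text{chord}-\phi)\le\tfrac{(b-a)^2}{8}\sup\phi''\le\tfrac1{8k}$ together with the Stirling factor $\e^{1/(6k)}$ gives $\EV{\boldsymbol\eps}|S|^p\le\sqrt2\,\e^{1/(6k)}(p/\e)^{p/2}\|\bx\|_2^p$, and $\sqrt2\,\e^{1/(6k)}\le2^{3/4}$ holds precisely for $k\ge1$, i.e.\ for all $p\ge2$. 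The genuine obstacle is thus not any isolated estimate but verifying that the slack between the leading Stirling constant $\sqrt2$ and the advertised $2^{3/4}$ is uniformly large enough to swallow both the Stirling correction and the interpolation defect; the computation above shows that it is, with the boundary case landing exactly at $p=2$.
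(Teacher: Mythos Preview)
The paper does not actually prove this proposition: immediately after the statement it writes ``The proof can be found in a lot of literature, see for example \cite{Rau1}, p.35'' and moves on. So there is no in-paper proof to compare against; your task was in effect to supply one, and you have.

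Your argument is correct. The even-integer step via termwise comparison with Gaussian moments is the standard route and gives $\EV{\boldsymbol\eps}|S|^{2k}\le(2k-1)!!\,\|\bx\|_2^{2k}$; your reduction of the complex case to the real one by H\"older on the mixed terms of $((\Re S)^2+(\Im S)^2)^k$ is clean and avoids redoing the combinatorics. The Stirling estimate $(2k-1)!!\le\sqrt2\,\e^{1/(24k)}(2k/\e)^k$ is correct, and your handling of the non-even exponents is the right idea: log-convexity of moments bounds $\ln\EV{\boldsymbol\eps}|S|^p$ by the chord over $[2k,2k+2]$, and the quadratic interpolation-error bound $\text{chord}-\phi\le\tfrac{(b-a)^2}{8}\sup\phi''\le\tfrac1{8k}$ with $\phi(p)=\tfrac p2(\ln p-1)$ controls the convexity defect. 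The combined correction $\tfrac1{24k}+\tfrac1{8k}=\tfrac1{6k}$ then needs $\sqrt2\,\e^{1/(6k)}\le 2^{3/4}$, i.e.\ $\tfrac1{6k}\le\tfrac14\ln2$, which holds for all $k\ge1$. One small quibble: your closing remark that ``the boundary case lands exactly at $p=2$'' slightly overstates things; the binding constraint is $k=1$ (so $p\in[2,4]$), but $\tfrac16<\tfrac14\ln2$ with a little room to spare, and at $p=2$ itself the inequality is trivial since $\EV{\boldsymbol\eps}|S|^2=\|\bx\|_2^2$.

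This is essentially the proof one finds in the reference the paper cites (even-moment Gaussian comparison, Stirling, then interpolation), so you are aligned with the literature the authors defer to.
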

The proof can be found in a lot of literature, see for example \cite{Rau1}, p.35.

\subsection{Covering and packing estimates}

We will work in the framework of a random sampling matrix (with rows $\bX=\{X_j\}_{j=1}^m$, $\|X_j\|_\infty\leq K$) and introduce the metric
\[
d_{\bX,p}(\bx,\by)=\left(\frac1m\sum_{j=1}^m|\langle X_j,\bx-\by\rangle|^p\right)^{1/p}.
\]
$B_{\bX,p}(\bx,r)=\{\by\in\R^N:d_{\bX,p}(\bx,\by)<r\}$ denotes the ball of radius $r>0$ around $\bx\in\R^N$ with respect to the metric $d_{\bX,p}$. The next lemma is based on the method of Maurey.

\begin{lemma}[Covering lemma 1]
\label{lem:c1}
Let $0<r<K$, $p\geq1$,
\begin{equation}
M\geq2^{\frac3{4p}}\frac{8pK^2}{r^2\e}\label{eq:M-r}
\end{equation}
and let $G_M=\{\bz_j\}$ be the set of grid points in the $\ell^1$ unit cube with mesh size $\frac1M$, i.e. the set of points satisfying $\|\bz\|_1\leq1$ and $M\bz\in\Z^N$. Then $B_1=\{\bz\in\R^N;\|\bz\|_1\leq1\}$ is contained in $\cup_j B_{\bX,2p}(\bz_j,r)$ for some fix realization of $\bX=\{X_j\}$, with the property $\|X_j\|_\infty< K$ and $r$ given by equality in \eqref{eq:M-r}. The number of grid points is less than
\[
{{2N+M}\choose{M}}\leq\left(\frac{2N\e}{M}+\e\right)^M.
\]
\end{lemma}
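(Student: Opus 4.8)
The plan is to recognize this as an instance of Maurey's empirical method and to produce, for each target point $\bx\in B_1$, a nearby grid point in the metric $d_{\bX,2p}$; the cardinality bound will then follow from a crude multiset count. Fix any realization of $\bX$ with $\|X_j\|_\infty<K$. Since $B_1$ is the convex hull of the $2N+1$ atoms $\{\mathbf 0,\pm\be_1,\dots,\pm\be_N\}$, any $\bx=(x_1,\dots,x_N)\in B_1$ can be written as an average of independent atom-valued random vectors: I would let $Z_1,\dots,Z_M$ be i.i.d.\ taking the value $\sgn(x_i)\be_i$ with probability $|x_i|$ and the value $\mathbf 0$ with the remaining probability $1-\|\bx\|_1\geq0$, and set $\bz=\frac1M\sum_{k=1}^M Z_k$. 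Then $\EV{Z}Z_k=\bx$, so $\EV{Z}\bz=\bx$; moreover $M\bz=\sum_k Z_k\in\Z^N$ and $\|\bz\|_1\leq\frac1M\sum_k\|Z_k\|_1\leq1$, so every realization of $\bz$ is a grid point of $G_M$.

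The core estimate is to bound $\EV{Z}\,d_{\bX,2p}(\bx,\bz)^{2p}$ and show it drops below $r^{2p}$ exactly under \eqref{eq:M-r}. Writing $u_k=\langle X_j,Z_k\rangle$, which are i.i.d.\ with $\EV{Z}u_k=\langle X_j,\bx\rangle$, each row contributes $\EV{Z}|\langle X_j,\bx-\bz\rangle|^{2p}=M^{-2p}\,\EV{Z}|\sum_k(u_k-\EV{Z}u_k)|^{2p}$. I would first apply the symmetrization proposition (Proposition \ref{prop:sym}, with exponent $2p$ and the absolute value as norm) to replace the centred sum by a Rademacher sum $\sum_k\eps_k u_k$ at the cost of a factor $2^{2p}$, then condition on $Z$ and apply Khintchine's inequality (Proposition \ref{prop:khin}, with $p\mapsto 2p$) to get $\EV{\eps}|\sum_k\eps_k u_k|^{2p}\leq2^{3/4}(2p/\e)^p(\sum_k|u_k|^2)^p$. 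The bound $|u_k|\leq\|X_j\|_\infty\|Z_k\|_1<K$ gives $\sum_k|u_k|^2<MK^2$, and collecting constants yields $\EV{Z}\,d_{\bX,2p}(\bx,\bz)^{2p}<2^{3/4}\big(8pK^2/(\e M)\big)^p$, uniformly in $j$. This is at most $r^{2p}$ precisely when $M\geq2^{3/(4p)}8pK^2/(r^2\e)$, i.e.\ \eqref{eq:M-r}. Since the expectation is then strictly below $r^{2p}$, some realization of $\bz\in G_M$ satisfies $d_{\bX,2p}(\bx,\bz)<r$, so $\bx\in B_{\bX,2p}(\bz,r)$; as $\bx\in B_1$ was arbitrary, the covering follows, in fact for every (not merely some) admissible realization of $\bX$.

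For the cardinality I would use the same atom decomposition to exhibit a surjection from the multisets of size $M$ drawn from the $2N+1$ atoms onto $G_M$: any grid point $\bz$ has $M\bz\in\Z^N$ with $\|M\bz\|_1\leq M$, so it is realized by taking $|(M\bz)_i|$ copies of $\sgn((M\bz)_i)\be_i$ and padding with copies of $\mathbf 0$ to total $M$ atoms. Hence $|G_M|\leq\binom{2N+M}{M}$, and the standard estimate $\binom{n}{k}\leq(n\e/k)^k$ with $n=2N+M$, $k=M$ gives $\binom{2N+M}{M}\leq\big((2N+M)\e/M\big)^M=(2N\e/M+\e)^M$.

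I expect the main obstacle to be bookkeeping the constants in the second paragraph so that they land exactly on \eqref{eq:M-r}: one must track the factor $2$ from symmetrization raised to the power $2p$, the $2^{3/4}(2p/\e)^p$ from Khintchine, and the $M^{-p}$ surviving from the averaging, and verify that $2^{2p}(2p)^p=(8p)^p$ so that the $K$ and $p$ dependence combines into the single factor $\big(8pK^2/(\e M)\big)^p$. The only other delicate point is obtaining the strict inequality $d_{\bX,2p}(\bx,\bz)<r$ required by the open balls, which is exactly where the hypothesis $\|X_j\|_\infty<K$ (rather than $\leq K$) enters, via the strict bound $|u_k|<K$.
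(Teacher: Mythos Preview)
Your proposal is correct and follows essentially the same route as the paper: the identical Maurey-type random atom construction $\bz=\frac1M\sum_k Z_k$, symmetrization followed by Khintchine applied row-by-row, the bound $|\langle X_j,Z_k\rangle|<K$ to collapse the constants into $2^{3/4}(8pK^2/(\e M))^p$, and the multiset count $\binom{2N+M}{M}$ with the standard estimate $\binom{n}{k}\leq(n\e/k)^k$. Your added remarks (that the covering in fact holds for every admissible realization of $\bX$, and that the strict inequality relies on $\|X_j\|_\infty<K$) are correct refinements of what the paper states.
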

\begin{proof}[Proof of lemma \ref{lem:c1}]
Fix a point in $\bx=(x_1,\dots,x_N)\in B_1$ and define a random vector $Z=(z_1,\dots,z_N)$ by letting it take the value $\sgn(x_j)\be_j$ with probability $|x_j|$, and $Z=\bold{0}$ with probability $1-\|\bx\|_1$ (so $\|Z\|_0\leq1$). Let now $Z_k, k=1,\dots,M$ be $M$ independent copies of $Z$ and define
\[
\bz=\frac1M\sum_{k=1}^M Z_k.
\] 
Then $\bz\in G_M$ and $\EV{Z}\bz=\bx$. Now it is enough to prove that
\[
\frac1m\EV{Z}\sum_{j=1}^m|\langle X_j,\bz-\bx\rangle|^{2p}<r^{2p}
\]
for some $p\geq1$. By symmetrization and Khintchine's inequality applied to every term,
\begin{multline*}
\frac1m\sum_{j=1}^m\EV{Z}|\langle X_j,\bz-\bx\rangle|^{2p}\leq\frac1m\sum_{j=1}^m2^{2p}\EV{Z}\EV{\boldsymbol{\eps}}\left|\frac1M\sum_{k=1}^M\eps_k|\langle X_j,Z_k\rangle|\right|^{2p}\leq\\
\frac1m\sum_{j=1}^m\left(\frac2M\right)^{2p}2^{3/4}\left(\frac{2p}\e\right)^p\EV{Z}\left(\sum_{k=1}^M|\langle X_j,Z_k\rangle|^2\right)^p<2^{3/4}\left(\frac{8p}{M\e}\right)^pK^{2p}=:r^{2p}.
\end{multline*}
The number of balls needed for the cover follows from simple combinatorics. We can choose $M$ vectors out of the collection $\{\pm\be_j\}_{j=1}^N\cup\{\boldsymbol{0}\}$ in less than ${{2N+1+M-1}\choose M}$ ways (i.e we count the number of unordered selections with repetition allowed). It is also well-known that
\begin{align*}
{{2N+M}\choose M}\leq\left(\frac{2N\e}{M}+\e\right)^M.
\end{align*}
\end{proof}

\begin{remark}
We will use Lemma \ref{lem:c1} for $\bz\in B_1(0,\sqrt{s}), M=2^{2k}$, so the radii of the balls in the cover will then be 
\[
r_k=2^{-k}2^{\frac1{4p}}K\left(\frac{8ps}{\e}\right)^{1/2},
\]
and the number of balls in the cover (the covering number) for this $k$ will be
\[
N_k=\left(\frac{2N\e}{2^{2k}}+\e\right)^{2^{2k}}.
\]
\end{remark}

\section{Uniform recovery theorem}
\label{sec:mains}

The following technical lemma is going to be the key ingredient and we postpone the rather involved proof until the end of this section.

\begin{lemma}
\label{lem:cru}
Let $A\in\C^{m\times N}$ be a random sampling matrix with corresponding low entropy isometry (or restricted isometry) constants $\delta_s$ and rows $\{X_j\}_{j=1}^m$ having the properties that $\|X_j\|_\infty<K$ for some $K\geq1$ and $\EV{}[X_j^*X_j]=I_N$ for all $j$. Suppose that $N>4p, p=\ln(2^{3/4}K^2s)\geq2, 0<\lambda,g<1$, then
\begin{equation*}
(\EV{}\delta_s^{2n})^{\frac1{2n}}\leq(H+\lambda g)((\EV{}\delta_s^{2n})^{\frac1{2n}}+1)^{\frac1{2q}}
\end{equation*}
where $q=q(K,s)\in(1,2]$ and
\begin{multline*}
H=H(N,K,m,s,\lambda,g)=\\\left(\frac{2^{10}K^2s}{(\ln2)^2m}\right)^{1/2}\left(p^{1/2}\ln\left(\frac{2^6\e K^2s}{(\lambda g)^2}\right)\ln^{1/2}(N/p)+\ln^{1/2}(1/\epsilon^\alpha)\right),\ \alpha=\frac{\e\ln2}{8}.
\end{multline*}
\end{lemma}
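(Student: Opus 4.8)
The plan is to realize $\delta_s$ as the supremum of a symmetrized empirical process and then to control its $2n$-th moment by a chaining argument carried out explicitly over the nets of Lemma \ref{lem:c1}. By Proposition \ref{Rau1:P2.5} applied to the normalized matrix $\frac1{\sqrt m}A$, on the set $T_s$ of $s$-sparse unit vectors we have
\[
\delta_s=\sup_{\bx\in T_s}\left|\frac1m\sum_{j=1}^m|\langle X_j,\bx\rangle|^2-1\right|.
\]
Viewing $\xi_j=(|\langle X_j,\bx\rangle|^2)_{\bx\in T_s}$ as independent random vectors in $\ell^\infty(T_s)$ equipped with the sup-seminorm, the hypothesis $\EV{}[X_j^*X_j]=I_N$ gives $\EV{}\xi_j\equiv1$ on $T_s$, so that $m\delta_s=\|\sum_j(\xi_j-\EV{}\xi_j)\|$. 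Proposition \ref{prop:sym} then yields
\[
(\EV{}\delta_s^{2n})^{1/(2n)}\le\frac2m\left(\EV{}\EV{\boldsymbol\eps}\sup_{\bx\in T_s}\left|\sum_{j=1}^m\eps_j|\langle X_j,\bx\rangle|^2\right|^{2n}\right)^{1/(2n)},
\]
reducing everything to the conditional Rademacher process $R(\bx)=\sum_j\eps_j|\langle X_j,\bx\rangle|^2$.

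Conditioning on $\bX$, I would chain along the dyadic nets of the remark after Lemma \ref{lem:c1}, i.e. with $M=2^{2k}$, radii $r_k$ and covering numbers $N_k$. Each $\bx\in T_s\subset B_1(0,\sqrt s)$ is approximated by a sequence $\bx_k$ in successively finer grids, and $R(\bx)$ is telescoped as $R(\bx_{k_0})+\sum_{k>k_0}(R(\bx_k)-R(\bx_{k-1}))$. For a single increment, writing $a_j=\langle X_j,\bx_k\rangle$, $b_j=\langle X_j,\bx_{k-1}\rangle$ and $|a_j|^2-|b_j|^2=(|a_j|-|b_j|)(|a_j|+|b_j|)$, Khintchine's inequality (Proposition \ref{prop:khin}) in the $2n$-th moment produces $(\sum_j(|a_j|-|b_j|)^2(|a_j|+|b_j|)^2)^{1/2}$, and Hölder with conjugate exponents $p,q$ (so $q=p/(p-1)\in(1,2]$ for $p\ge2$, matching $q=q(K,s)\in(1,2]$) splits this into $m^{1/(2p)}d_{\bX,2p}(\bx_k,\bx_{k-1})$, controlled by $2r_{k-1}$, times an $m^{1/(2q)}d_{\bX,2q}(\,\cdot\,,0)$-radius factor; note $m^{1/(2p)}m^{1/(2q)}=m^{1/2}$. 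The crucial point, and the reason for the choice $p=\ln(2^{3/4}K^2s)$, is that $|\langle X_j,\bx\rangle|^{2(q-1)}\le(K\sqrt s)^{2(q-1)}=\exp\left(2(q-1)\ln(K\sqrt s)\right)=O(1)$, since $q-1=1/(p-1)$ makes the exponent bounded; hence the $d_{\bX,2q}$-radius is $\lesssim(1+\delta_s)^{1/(2q)}$ up to an absolute constant, using $\frac1m\sum_j|\langle X_j,\bx\rangle|^2\le1+\delta_s$ on $T_s$.

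The remaining work is the bookkeeping of the chain. Replacing suprema over the nets by sums (a union bound over the $N_k\cdot N_{k-1}$ pairs at level $k$), the Rademacher moment at scale $k$ gains a factor $\lesssim(\ln N_k)^{1/2}$ with $\ln N_k=2^{2k}\ln(2N\e/2^{2k}+\e)$; summing the resulting geometric-type series in $k$ and inserting the Khintchine constant $2^{3/4}(2p/\e)^p$ produce, after the prefactor $\frac2m$, the factor $(2^{10}K^2s/((\ln2)^2m))^{1/2}p^{1/2}\ln(2^6\e K^2s/(\lambda g)^2)\ln^{1/2}(N/p)$, while the deviation (tail) part of the chaining contributes the additive $\ln^{1/2}(1/\epsilon^\alpha)$, with $\alpha=\e\ln2/8$ collecting the $\e$, the $\ln2$ of the dyadic scales and the $8$ of Lemma \ref{lem:c1}; together these give exactly $H$. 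The truncation of the chain at a finite top level is absorbed into the additive $\lambda g$, its initial-net residual being controlled by $\lambda,g<1$. Finally, transporting the single factor $(1+\delta_s)^{1/(2q)}$ back through the outer $\EV{}$ by monotonicity of $L^p$-norms (valid since $n/q\le2n$) and the triangle inequality turns it into $((\EV{}\delta_s^{2n})^{1/(2n)}+1)^{1/(2q)}$, yielding the stated self-referential estimate.

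I expect the main obstacle to be precisely this bookkeeping: tracking the geometric series over scales together with the covering numbers $N_k$ so that all constants collapse into $H$, choosing the truncation level so the residual is genuinely of size $\lambda g$, and—most delicately—ensuring that the $(1+\delta_s)^{1/(2q)}$ radius factor is extracted once and does \emph{not} compound across scales, so that it appears only as the single outer factor in the final bound.
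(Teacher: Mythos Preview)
Your outline follows the paper's argument closely: symmetrization, a telescoping chain over the nets of Lemma~\ref{lem:c1}, Khintchine plus H\"older with the specific conjugate pair $p=\ln(2^{3/4}K^2s)$, $q=p/(p-1)$, extraction of a single radius factor $S^{1/q}$ with $S^2\le1+\delta_s$, and truncation of the chain at a top level chosen so the residual is $\le\lambda g\,S^{1/q}/2$. Two points in your sketch hide nontrivial content and are exactly where the paper's proof does real work beyond bookkeeping.

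First, the approximants $\bx_k$ cannot simply be the gridpoints of Lemma~\ref{lem:c1}: those lie in $B_1(0,\sqrt s)$ but need not lie in $T_s$ (nor in $\U$), so the bound $\tfrac1m\sum_j|\langle X_j,\bx_k\rangle|^2\le1+\delta_s$ that you invoke does not apply to them directly. The paper handles this by, for each grid ball that meets $\U$, selecting an arbitrary element of the intersection as the chaining center $\pi_k\bu$; this keeps every $\Pi_k\bu$ inside $\U$ (so the $S^{1/q}$ factor is legitimate) while the number of such centers is still bounded by $N_k$. Second, your claim that a union bound at level $k$ yields a factor $\lesssim(\ln N_k)^{1/2}$ conceals a genuine trick: at a \emph{fixed} moment order $2n$, summing over the net only produces $N_k^{1/(2n)}$ together with the Khintchine factor $(2n/\e)^{1/2}$, and neither equals $(\ln N_k)^{1/2}$ in general. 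The paper instead uses the monotonicity $\|f_k\|_{\boldsymbol\eps,2n}\le\|f_k\|_{\boldsymbol\eps,2n_k}$ with a scale-dependent $n_k=\max\{\ln(2^{3/4}N_k),\ln(1/\epsilon)\}\ge n=n_l$, so that $(2^{3/4}N_k)^{1/n_k}\le\e$ and the Khintchine factor becomes $\sqrt{n_k}$. This device is also precisely where the additive $\ln^{1/2}(1/\epsilon^\alpha)$ enters---through the second branch of the $\max$ defining $n_k$, summed geometrically in $k$---rather than from a separate ``deviation/tail'' estimate as you suggest.
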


Using lemma \ref{lem:cru} we can prove:
\begin{theorem}
\label{thm:main1}
Let $A\in\C^{m\times N}$ be a random sampling matrix with corresponding low entropy isometry (or restricted isometry) constants $\delta_s$ and rows $\{X_j\}_{j=1}^m$ having the properties that $\|X_j\|_\infty<K$ for some $K\geq1$ and $\EV{}[X_j^*X_j]=I_N$ for all $j$. Suppose $0<\delta,\epsilon,\lambda<1$ and
\begin{equation}
\sqrt{m}>C_1K\sqrt{s}\left(\ln^{1/2}(2^{3/4}K^2s)\ln(C_2 K^2s)\ln^{1/2}(N)+
\ln^{1/2}(1/\epsilon)\right) \label{m_est11}
\end{equation}
where 
\begin{align*}
C_1(\delta,\lambda)=\frac{2^{5}\e^{1/4}}{\ln2}\frac{(\sqrt\e+\delta)^{1/2}}{(1-\lambda)\delta},\quad C_2(\delta,\lambda)=\frac{2^6\e^{3/2}(\delta+\sqrt\e)}{(\delta\lambda)^2}\ 
\end{align*}
Then $P(\delta_s>\delta)<\epsilon$, that is $\frac1{\sqrt{m}}A$ has the low entropy (or restricted) isometry property with constants $\delta_s\leq\delta$ with probability $1-\epsilon$.
\end{theorem}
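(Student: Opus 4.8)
The plan is to turn the moment estimate of Lemma \ref{lem:cru} into a tail bound for $\delta_s$ by the moment method, and then to read the hypothesis \eqref{m_est11} off the resulting threshold. Write $E=(\EV{}\delta_s^{2n})^{1/(2n)}$. Lemma \ref{lem:cru} supplies the self-bounding inequality $E\le(H+\lambda g)(E+1)^{1/(2q)}$ with $q\in(1,2]$, so the exponent $1/(2q)\in[1/4,1/2)$ makes its right-hand side concave and sublinear in $E$. I would proceed in three steps: (i) resolve this implicit inequality into the explicit statement $E\le\delta$; (ii) split the resulting budget between the two summands $H$ and $\lambda g$, thereby fixing the free parameter $g$ and producing the constants $C_1,C_2$; and (iii) pass from $E\le\delta$ to $P(\delta_s>\delta)<\epsilon$ by Markov's inequality.

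For step (i), since the left-hand side of the self-bounding inequality is linear in $E$ and the right-hand side concave with sublinear growth, the function $E\mapsto E-(H+\lambda g)(E+1)^{1/(2q)}$ is convex, negative at $E=0$, and tends to $+\infty$; hence it has a unique positive zero $E^\ast$ and the inequality forces $E\le E^\ast$. It then suffices to guarantee $E^\ast\le\delta$, i.e. $(H+\lambda g)(\delta+1)^{1/(2q)}\le\delta$. Because $\delta+1\ge1$ and $1/(2q)\le1/2$, one has $(\delta+1)^{1/(2q)}\le(\delta+1)^{1/2}\le(\delta+\sqrt\e)^{1/2}$, so it is enough to impose $H+\lambda g\le\delta(\delta+\sqrt\e)^{-1/2}$; this is the origin of the factor $(\sqrt\e+\delta)^{1/2}$ in $C_1$.

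For step (ii) I would pin the free parameter $g$ so that the logarithm $\ln(2^6\e K^2s/(\lambda g)^2)$ appearing in $H$ collapses to $\ln(C_2K^2s)$; a direct computation shows this forces $g=\delta\,\e^{-1/4}(\delta+\sqrt\e)^{-1/2}$, and one checks $0<g<1$. With this choice $\lambda g\le\lambda\delta(\delta+\sqrt\e)^{-1/2}$ (using $\e^{-1/4}\le1$), so the budget inequality from step (i) is implied by $H\le(1-\lambda)\delta(\delta+\sqrt\e)^{-1/2}$. Substituting the explicit prefactor $(2^{10}K^2s/((\ln2)^2m))^{1/2}$ of $H$ and solving for $\sqrt m$ yields a condition of the shape \eqref{m_est11}; the single factor $\e^{1/4}$ in $C_1$ conveniently absorbs both the harmless replacement $\ln^{1/2}(N/p)\le\ln^{1/2}N$ (valid as $p\ge2$) and the replacement of the coefficient $(\e\ln2/8)^{1/2}$ of the $\epsilon$-term by $1$, since $\e^{1/4}$ exceeds both.

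For step (iii), Markov's inequality applied to $\delta_s^{2n}$ gives $P(\delta_s>\delta)\le\delta^{-2n}\EV{}\delta_s^{2n}=(E/\delta)^{2n}$, and choosing the moment order $2n$ proportional to $\ln(1/\epsilon)$ turns $E\le\delta$ (with a fixed fractional margin) into a tail of size $\epsilon$ rather than the $1/\epsilon$ a second-moment estimate would give; the price is precisely the additive $\ln^{1/2}(1/\epsilon^\alpha)$ term, $\alpha=\e\ln2/8$, already carried inside $H$, so that the $\epsilon$ in $H$ is identified with the target failure probability. The hard part, in my view, is exactly this last identification together with the sharpness of step (i): one must choose the moment order consistently with the $\epsilon$-dependent term in $H$ so as to secure the optimal $\ln(1/\epsilon)$ (not $1/\epsilon$) dependence, and one must carry the constants cleanly through the budget split so that the forced value of $g$ delivers $C_2$ inside the logarithm while the leftover budget delivers $C_1$ as the overall prefactor. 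Once \eqref{m_est11} holds, steps (i)--(iii) combine to give $P(\delta_s>\delta)<\epsilon$, i.e. $\frac1{\sqrt m}A$ has low entropy (respectively restricted) isometry constants $\delta_s\le\delta$ with probability at least $1-\epsilon$.
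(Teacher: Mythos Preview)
Your plan matches the paper's: invoke Lemma~\ref{lem:cru}, resolve the self-bounding inequality for $E=(\EV{}\delta_s^{2n})^{1/(2n)}$, fix $g$, and close with Markov. In step~(ii) you even recover the paper's choice $g=\delta\,\e^{-1/4}(\delta+\sqrt\e)^{-1/2}$. The problem is the bookkeeping linking steps~(i) and~(iii), and in particular your account of where the factor $\e^{1/4}$ in $C_1$ comes from.

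Your step~(i) imposes only $H+\lambda g\le\delta(\delta+\sqrt\e)^{-1/2}$, which gives $E\le\delta$ with a margin $E/\delta$ bounded above only by roughly $((\delta+1)/(\delta+\sqrt\e))^{1/2}\approx0.8$. But in Lemma~\ref{lem:cru} the moment order $n$ is \emph{not} a free parameter you can later tune: the inequality is proved for $n=n_l\ge\ln(1/\epsilon)$, with the \emph{same} $\epsilon$ sitting inside $H$. Markov therefore demands $(E/\delta)^{2n}\le\epsilon$, i.e.\ $E\le\delta\,\epsilon^{1/(2n)}$, and with $n\ge\ln(1/\epsilon)$ the only universal bound is $\epsilon^{1/(2n)}\ge\e^{-1/2}$. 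So what you actually need is $E\le\delta\,\e^{-1/2}$, which forces the tighter budget
\[
H+\lambda g\;\le\;\frac{\delta/\sqrt\e}{(\delta/\sqrt\e+1)^{1/2}}\;=\;\frac{\delta}{\e^{1/4}(\delta+\sqrt\e)^{1/2}},
\]
smaller than yours by exactly $\e^{1/4}$; this is precisely the $\e^{1/4}$ appearing in $C_1$. Your claim that this factor instead ``absorbs'' the replacements $\ln^{1/2}(N/p)\to\ln^{1/2}N$ and $\ln^{1/2}(1/\epsilon^\alpha)\to\ln^{1/2}(1/\epsilon)$ is backwards: since $p\ge2$ and $\alpha<1$, both substitutions make the hypothesis~\eqref{m_est11} \emph{stronger} than what is required and need no compensation whatsoever. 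The fix is simply to run Markov first, as the paper does: write $P(\delta_s>\delta)\le\EV{}\delta_s^{2n}/\delta^{2n}<\epsilon$, read off the target $E\le\delta\,\epsilon^{1/(2n)}$, use $n\ge\ln(1/\epsilon)$ to reduce to $E\le\delta/\sqrt\e$, and \emph{then} resolve the self-bounding inequality at that level. With this correction the rest of your outline produces the stated $C_1$ and $C_2$ verbatim.
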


\begin{proof}
Since in our framework $s\leq m\ll N$, by Markov's inequality, for any $n>0$,
\[
P(\delta_s>\delta)=P(\delta_s^{2n}>\delta^{2n})\leq\frac{\EV{}\delta_s^{2n}}{\delta^{2n}}<\epsilon.
\]
By lemma \ref{lem:cru}, this is less than $\epsilon\in(0,1)$ if,
\begin{equation}
H+\lambda g\leq\frac{\delta\epsilon^{\frac1{2n}}}{(\delta\epsilon^{\frac1{2n}}+1)^{\frac1{2q}}}.\label{eq:mark}
\end{equation}
Choosing $n\geq\ln\left(\frac1\epsilon\right)$ implies that $\epsilon^{\frac1{2n}}\geq\frac1{\sqrt\e}$, and with this choice \eqref{eq:mark} is easily seen to be implied by
\begin{equation}
H+\lambda g<\frac{\delta}{(\delta\sqrt\e+\e)^{\frac1{2}}}.\label{eq:mark2}
\end{equation}
Define the right hand side expression to be $g=g(\delta)$, then
\begin{multline*}
H<(1-\lambda)g(\delta)\iff\\
\left(\frac{2^{10}K^2s}{(\ln2)^2m}\right)^{1/2}\left(p^{1/2}\ln\left(\frac{2^6\e K^2s}{(\lambda g)^2}\right)\ln^{1/2}(N/p)+\ln^{1/2}(1/\epsilon^\alpha)\right)<(1-\lambda)g\iff\\
\sqrt{m}>\frac{2^{5}\e^{1/4}}{\ln2}\frac{(\sqrt\e+\delta)^{1/2}}{(1-\lambda)\delta}K\sqrt{s}\left(p^{1/2}\ln\left(\frac{2^6\e K^2s}{(\lambda g)^2}\right)\ln^{1/2}(N/p)+\ln^{1/2}(1/\epsilon^\alpha)\right).
\end{multline*}
Since $\alpha<1$, and by removing some lower order terms, \eqref{m_est11} can be seen to imply \eqref{eq:mark2}, so we are done.
\end{proof}
\begin{remark}\label{rmk:dev}
We could modify the proof, choosing $n$ larger so that $\epsilon^{1/2n}$ comes arbitrarily close to $1$, compared to above where we only used $\e^{-1/2}$ as lower bound. This corresponds to constants we would get by doing an argument closer to what is done for the best result in for example \cite{Rau1}, where the so-called deviation inequality is used.
\end{remark}
If we introduce
\[
C(\delta,\lambda)=\sqrt{2}C_1(\delta,\lambda), \ D(\delta,\lambda)=2C_2(\delta,\lambda)
\]
we get together with theorem \ref{thm:best_delta} the following corollary to theorem \ref{thm:main1}:

\begin{corollary}
\label{cor:1}
Let $A\in\C^{m\times N}$ be a random sampling matrix with corresponding restricted isometry constants $\delta_s$ and rows $\{X_j\}_{j=1}^m$ having the properties that $\|X_j\|_\infty<K$ for some $K\geq1$ and $\EV{}[X_j^*X_j]=I_N$ for all $j$. Suppose $0<\epsilon,\lambda<1$ and if
\begin{multline}
\sqrt{m}>C\left(\frac{4}{\sqrt{41}},\lambda\right)K\sqrt{s}\ln^{1/2}(2^{3/4}K^2s)\ln\left(D\left(\frac{4}{\sqrt{41}},\lambda\right) K^2s\right)\ln^{1/2}(N/p)+\\
C\left(\frac{4}{\sqrt{41}},\lambda\right)K\sqrt{s}\ln^{1/2}(1/\epsilon) \label{m_est_nsp}
\end{multline}
then with probability $1-\epsilon$, the matrix $\frac1{\sqrt{m}}A$ satisfies the null space property of order $s$.
\end{corollary}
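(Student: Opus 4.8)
The plan is to obtain Corollary \ref{cor:1} by combining the high-probability restricted isometry bound of Theorem \ref{thm:main1} with the deterministic implication of Theorem \ref{thm:best_delta}. The latter guarantees that once $\delta_{2s} < 4/\sqrt{41}$ holds, the matrix automatically lies in $NSP(s)$, so it suffices to exhibit an event of probability $1-\epsilon$ on which $\delta_{2s} < 4/\sqrt{41}$. Theorem \ref{thm:main1} produces exactly such a bound, but at a generic sparsity level and threshold. Hence the first move is to invoke Theorem \ref{thm:main1} with the sparsity parameter $s$ replaced by $2s$ and the threshold taken to be $\delta = 4/\sqrt{41}$, which yields $P(\delta_{2s} > 4/\sqrt{41}) < \epsilon$ as soon as the corresponding sample-count condition is satisfied.

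Second, I would translate that sample-count condition into \eqref{m_est_nsp} by tracking how each factor responds to $s \mapsto 2s$, starting not from the slightly lossy \eqref{m_est11} but from the sharper intermediate inequality derived inside the proof of Theorem \ref{thm:main1} (the one still carrying $\ln^{1/2}(N/p)$ and $\ln^{1/2}(1/\epsilon^\alpha)$). The prefactor $K\sqrt{2s} = \sqrt{2}\,K\sqrt{s}$ converts $C_1$ into $C = \sqrt{2}C_1$; the logarithmic argument $C_2 K^2 (2s) = 2C_2 K^2 s$ converts $C_2$ into $D = 2C_2$ inside the $\ln(D K^2 s)$ factor; the shift $p = \ln(2^{3/4}K^2\cdot 2s) = \ln 2 + \ln(2^{3/4}K^2 s)$ and the factor $\sqrt\alpha < 1$ in $\ln^{1/2}(1/\epsilon^\alpha) = \sqrt\alpha\,\ln^{1/2}(1/\epsilon)$ are both absorbed as lower-order contributions. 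Placing these pieces side by side reproduces the two-line bound \eqref{m_est_nsp}, with the separated $\epsilon$-term $C K\sqrt{s}\,\ln^{1/2}(1/\epsilon)$ serving as a safe over-estimate of the corresponding $C_1 K\sqrt{2s}\,\ln^{1/2}(1/\epsilon^\alpha)$ term.

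The step I expect to require the most care is the passage from the non-strict event $\{\delta_{2s} \leq \delta\}$ that Theorem \ref{thm:main1} delivers to the strict inequality $\delta_{2s} < 4/\sqrt{41}$ that Theorem \ref{thm:best_delta} demands. I would resolve this using the slack built into the strict inequality \eqref{m_est_nsp}: because $C(\delta,\lambda)$ and $D(\delta,\lambda)$ depend continuously on $\delta$, strict satisfaction of \eqref{m_est_nsp} at $\delta = 4/\sqrt{41}$ persists for some $\delta' < 4/\sqrt{41}$ sufficiently close (even though $C$ and $D$ grow as $\delta$ decreases), and applying Theorem \ref{thm:main1} with this $\delta'$ gives, with probability $1-\epsilon$, $\delta_{2s} \leq \delta' < 4/\sqrt{41}$. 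On that event Theorem \ref{thm:best_delta} applies and $\frac{1}{\sqrt m}A \in NSP(s)$, completing the argument. The remaining routine checks are that the standing hypotheses of Lemma \ref{lem:cru} ($N > 4p$ and $p \geq 2$) survive the substitution $s \mapsto 2s$, which they do in the regime of large $N$ and moderate $s, K$ considered here.
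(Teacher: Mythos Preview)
Your proposal is correct and follows exactly the approach the paper takes: the paper simply records $C=\sqrt{2}\,C_1$, $D=2C_2$ and states that the corollary follows from Theorem~\ref{thm:main1} (applied at sparsity $2s$ with $\delta=4/\sqrt{41}$) together with Theorem~\ref{thm:best_delta}. Your write-up is in fact more careful than the paper's one-line derivation, particularly in handling the strict-versus-non-strict inequality via continuity of $C,D$ in $\delta$ and in noting that the $\ln 2$ shift in $p$ and the factor $\sqrt{\alpha}<1$ are absorbed as lower-order terms.
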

\begin{remark}
Another variant of the above corollary would be to instead demand that the low entropy isometry constants $\tilde\delta_{4s}<1$ and use proposition \ref{prop1}.
\end{remark}
Below we present tables of values of $C^2$ (for convenience these are easier to compare with older results) and $D$ for some interesting choices of $\delta$ and $\lambda$.

\begin{table}[h!]
\caption{Some values of $C(\delta,\lambda)^2, D(\delta,\lambda)$.}
\begin{center}
\begin{tabular}{|c|c|c|c|c|}
\hline
$\lambda$ & $\left\lceil C\left(\frac4{\sqrt{41}},\lambda\right)^2\right\rceil$ & $\left\lceil D\left(\frac4{\sqrt{41}},\lambda\right)\right\rceil$ & $\left\lceil C\left(\frac23,\lambda\right)^2\right\rceil$ & $\left\lceil D\left(\frac23,\lambda\right)\right\rceil$ \\\hline
0&40943&$\infty$&36613&$\infty$\\\hline
1/9&51818&270695&46339&242072\\\hline
1/2&163769&13368&146452&11955\\\hline
$1/\sqrt{\e}$&264453&9085&236489&8124\\\hline
1&$\infty$&3342&$\infty$&2989\\\hline
\end{tabular}
\end{center}
\end{table}

\begin{remark}
Note however that squaring for example \eqref{m_est_nsp} in order to arrive at an expression such as \eqref{eq:optmas}, $C^2$ needs to be multiplied with something like $1+\beta$ (using for example Young's inequality), but $\beta>0$ can be chosen very small.
\end{remark}
Asymptotically, in the sense of remark \ref{rmk:dev}, we could gain about a factor $\e$. So optimal lower bounds using our methods are given by:
\begin{align*}
17747\leq\left\lceil C\left(\frac4{\sqrt{41}},0\right)^2\right\rceil,\quad& 1449\leq\left\lceil D\left(\frac4{\sqrt{41}},1\right)\right\rceil\\
15985\leq\left\lceil C\left(\frac23,0\right)^2\right\rceil,\quad& 1305\leq\left\lceil D\left(\frac23,1\right)\right\rceil
\end{align*}

\begin{proof}[Proof of lemma \ref{lem:cru}]
First note that
\begin{align*}
\EV{X}\frac1m\sum_{j=1}^m|\langle X_j,\bu\rangle|^2=\frac1m\sum_{j=1}^m\bu\EV{X_j}[X_j^*X_j]\bu^*=\frac1mm\langle \bu,\bu\rangle=\|\bu\|_2^2.
\end{align*}
We will do the proof for the low entropy isometry constants, then the same conclusion will hold for the restricted isometry constants since they are always smaller. Let $\U=\{\bu\in\R^N;\|\bu\|_1\leq\sqrt{s},\|\bu\|_2\leq1\}$, by the symmetrization inequality (prop. \ref{prop:sym}), Fatou's lemma and the definition of $\delta_s$ (as in proposition \ref{Rau1:P2.5}(b), a similar definition holds for the low entropy isometry constants when we take supremum over the larger set $\U$), we get
\begin{align*}
\EV{}\delta_s^{2n}=\EV{}\sup_{\bu\in\mathcal{U}}\left|\frac1m\sum_{j=1}^m|\langle X_j,\bu\rangle|^2-\|\bu\|_2^2\right|^{2n}\leq
2^{2n}\EV{}\EV{\boldsymbol{\eps}}\sup_{\bu\in\mathcal{U}}\left|\frac1m\sum_{j=1}^m\eps_j|\langle X_j,\bu\rangle|^2\right|^{2n}
\end{align*}
where $\boldsymbol{\eps}=\{\eps_j\}_{j=1}^m$ is a Rademacher sequence.
Let us now fix a realization of the $X_j=:\bx_j$ and define
\[
E_{2n}:=\left(\EV{\eps}\sup_{\bu\in\mathcal{U}}\left|\frac1m\sum_{j=1}^m\eps_j|\langle \bx_j,\bu\rangle|^2\right|^{2n}\right)^{1/2n}, \textrm{ so }\EV{}\delta_s^{2n}\leq\EV{}[(2E_{2n}(X))^{2n}].
\]
 
By lemma \ref{lem:c1}, for every $\bu\in\U$ there exists a gridpoint $\bz_k\in G_k:=(2^{-2k}\sqrt{s})\Z^N\cap B_1(0,\sqrt{s})$, (where $B_1(0,\sqrt{s})=\{\bz\in\R^N:\|\bz\|_1<\sqrt{s}\}$ and since $\U\subset\{\bu\in\R^N;\|\bu\|_1\leq\sqrt{s}, \|\bu\|_2\leq1\}$) such that for any $p\geq1,$
\[
d_{\bX,2p}(\bu,\bz_k)<r_k(p).
\]
For every $\bz_k\in G_k$ consider
\[
B_{\bX,2p}(\bz_k,r_k)=\{\bz\in\R^N:d_{\bX,2p}(\bz,\bz_k)<r_k(p)\}.
\] 
If $\mathcal{U}\cap B_{\bX,2p}(\bz_k,r_k)\neq\emptyset$, pick an arbitrary element from this set and denote it $\pi_k \bu$, then we get a finite cover of $\U$ with balls $B_x(\pi_k\bu,2r_k)$. We will do this for $l\leq k\leq L$ where $l$ and $L$ are to be determined. Denote by $\U_k:=\{\pi_k\bu:\bu\in \U_{k+1}\}$ and note that  $|\U_k|\leq|G_k|\leq N_k<\infty$. Now we get using telescoping sums, and the conventions $\U_{L+1}=\U,\Pi_{L+1}\bu=\bu$
\begin{multline*}
\sum_{j=1}^m\eps_j|\langle \bx_j,\bu\rangle|^2=\sum_{j=1}^m\sum_{k=l+1}^{L+1}\eps_j(|\langle \bx_j,\Pi_k\bu\rangle|^2-|\langle \bx_j,\Pi_{k-1}\bu\rangle|^2)+\sum_{j=1}^m\eps_j|\langle \bx_j,\Pi_{l}\bu\rangle|^2\\
\implies
\left|\frac1m\sum_{j=1}^m\eps_j|\langle \bx_j,\bu\rangle|^2\right|\leq
\left|\frac1m\sum_{j=1}^m\eps_j(|\langle \bx_j,\bu\rangle|^2-|\langle \bx_j,\Pi_{L}\bu\rangle|^2)\right|+\\
\sum_{k=l+1}^{L}\left|\frac1m\sum_{j=1}^m\eps_j(|\langle \bx_j,\Pi_k\bu\rangle|^2-|\langle \bx_j,\Pi_{k-1}\bu\rangle|^2)\right|
+\left|\frac1m\sum_{j=1}^m\eps_j|\langle \bx_j,\Pi_l\bu\rangle|^2\right|,
\end{multline*}
where $\Pi_k\bu=\pi_k\circ\pi_{k+1}\circ\cdots\circ\pi_L\bu$. Then we get
\begin{multline*}
E_{2n}=\left(\EV{\boldsymbol{\eps}}\sup_{\bu\in\mathcal{U}}\left|\frac1m\sum_{j=1}^m\eps_j|\langle \bx_j,\bu\rangle|^2\right|^{2n}\right)^{1/2n}\leq\\
\left(\EV{\boldsymbol{\eps}}\sup_{\bu\in\mathcal{U}}\left|\frac1m\sum_{j=1}^m\eps_j(|\langle \bx_j,\bu\rangle|^2-|\langle \bx_j,\pi_{L}\bu\rangle|^2)\right|^{2n}\right)^{1/2n}+\\
\left(\EV{\boldsymbol{\eps}}\left[\sum_{k=l+1}^{L}\sup_{\bu\in\mathcal{U}_k}\left|\frac1m\sum_{j=1}^m\eps_j(|\langle \bx_j,\bu\rangle|^2-|\langle \bx_j,\pi_{k-1}\bu\rangle|^2)\right|\right]^{2n}\right)^{1/2n}+\\
\left(\EV{\boldsymbol{\eps}}\sup_{\Pi_l\bu\in\U_l}\left|\frac1m\sum_{j=1}^m\eps_j|\langle \bx_j,\Pi_l\bu\rangle|^2\right|^{2n}\right)^{1/2n}=:S_{L+1}+S_{l+1,L}+S_l.
\end{multline*}

In order to estimate $S_{l+1,L}$ we introduce 
\begin{align*}
g_k(\boldsymbol{\eps},\bu):=\left|\frac1m\sum_{j=1}^m\eps_j(|\langle \bx_j,\bu\rangle|^2-|\langle \bx_j,\pi_{k-1}\bu\rangle|^2)\right|,\quad \bu\in\U_k\textrm{ and}\\f_k(\boldsymbol{\eps}):=\sup_{\bu\in\mathcal{U}_k}g_k(\boldsymbol{\eps},\bu).
\end{align*}
We also specify norm notations using
\[
\|f\|_{\boldsymbol{\eps},2n}:=(\EV{\boldsymbol{\eps}}|f|^{2n})^{1/2n},\textrm{ we can write } S_{l+1,L}=\left\|\sum_{k=l+1}^{L}f_k\right\|_{\boldsymbol{\eps},2n}
\]
We will derive auxiliary estimates for $S_{l}, \|f_k\|_{\boldsymbol{\eps},2n}$ and $S_{L+1}$, summarized in
\begin{lemma}\label{lem:auxest}
For any non-negative integers $l\leq k\leq L$, there are $p>q>1$ (depending on $K$ and $s$), $\frac1p+\frac1q=1$, such that for any positive integer $n$ the following estimates hold:
\begin{eqnarray}
S_l&\leq&\left(\frac{2K^2sn}{m}\right)^{1/2}(2^{3/4}N_l)^{1/2n}S^{1/q}\label{est:S_l}\\
\|f_k\|_{\boldsymbol{\eps},2n}&\leq&\left(\frac{2^{10-2k}K^2snp}{m}\right)^{1/2}\left(\frac{(2^{3/4}N_k)^{1/n}}{\e}\right)^{1/2}S^{1/q}\label{est:f_k}\\
S_{L+1}&\leq&(2^{7-2L}K^2sp)^{1/2}S^{1/q}\label{est:S_L+1}
\end{eqnarray}
where $N_k\geq|\U_k|$ and
\[
S=S(\bx)=\sup_{\bu\in\U}\left(\frac1m\sum_{j=1}^m|\langle \bx_j,\bu\rangle|^2\right)^{1/2}.
\]
\end{lemma}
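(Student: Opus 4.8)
\section*{Proof proposal for Lemma \ref{lem:auxest}}

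The plan is to prove the three estimates \eqref{est:S_l}, \eqref{est:f_k}, \eqref{est:S_L+1} by a common mechanism, splitting them into two types. The terms $S_l$ and $\|f_k\|_{\boldsymbol\eps,2n}$ are suprema over the \emph{finite} nets $\U_l,\U_k$ (with $|\U_l|\le N_l$, $|\U_k|\le N_k$ from Lemma \ref{lem:c1}), so I would bound each supremum by the sum over the net and apply Khintchine's inequality (Proposition \ref{prop:khin}) termwise in the Rademacher variables with exponent $2n$. This produces the cardinality factors $(2^{3/4}N_l)^{1/2n}$, $(2^{3/4}N_k)^{1/2n}$ together with a factor $(2n/\e)^{1/2}$, and reduces matters to a uniform second-moment bound on $\frac1{m^2}\sum_j|\langle\bx_j,\bu\rangle|^4$ (for $S_l$) or on $\frac1{m^2}\sum_j\big(|\langle\bx_j,\bu\rangle|^2-|\langle\bx_j,\pi_{k-1}\bu\rangle|^2\big)^2$ (for $f_k$). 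The tail term $S_{L+1}$ is instead a supremum over the \emph{continuous} set $\U$, so here I would bound the Rademacher sum pointwise by the triangle inequality, $\big|\frac1m\sum_j\eps_j(\cdots)\big|\le\frac1m\sum_j|\cdots|$; this removes the randomness entirely and leaves the deterministic quantity $\sup_{\bu}\frac1m\sum_j\big||\langle\bx_j,\bu\rangle|^2-|\langle\bx_j,\pi_L\bu\rangle|^2\big|$.

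The engine for the remaining deterministic sums is uniform. Writing $w=\bu-\pi_{k-1}\bu$ (resp. $w=\bu-\pi_L\bu$), I would use the elementary inequality $\big||\langle\bx_j,\bu\rangle|^2-|\langle\bx_j,\pi\bu\rangle|^2\big|\le|\langle\bx_j,w\rangle|\,(|\langle\bx_j,\bu\rangle|+|\langle\bx_j,\pi\bu\rangle|)$ to factor out the increment $\langle\bx_j,w\rangle$, and then apply H\"older's inequality with the conjugate exponents $p,q$, $\frac1p+\frac1q=1$, chosen as $p=\ln(2^{3/4}K^2s)$ exactly as in Lemma \ref{lem:cru}. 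H\"older separates a factor in $\langle\bx_j,w\rangle$ which is controlled by the covering radius, $d_{\bX,2p}(\bu,\pi\bu)\le 2r_k$, from an $\ell^{2q}$-factor in the $|\langle\bx_j,\bu\rangle|$'s. I would convert the latter to the $\ell^2$-quantity $S$ by peeling off the uniform bound $|\langle\bx_j,\bu\rangle|\le K\|\bu\|_1\le K\sqrt s$: since $q\le2$, $\frac1m\sum_j|\langle\bx_j,\bu\rangle|^{2q}\le(K^2s)^{q-1}\frac1m\sum_j|\langle\bx_j,\bu\rangle|^2\le(K^2s)^{q-1}S^2$, which yields the factor $(K^2s)^{(q-1)/q}=(K^2s)^{1/p}$ alongside $S^{2/q}$. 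For $S_l$, where there is no increment, the same idea applies directly via $\frac1m\sum_j|\langle\bx_j,\bu\rangle|^4\le K^2s\cdot S^2$ followed by the splitting $S^2=S^{2/q}S^{2/p}$.

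The crux, and the step I expect to be the main obstacle to carry out with the stated constants, is that the exponent of $S$ must come out as $1/q$ rather than $1$. This is forced by the precise choice $p=\ln(2^{3/4}K^2s)$, for which $(2^{3/4}K^2s)^{1/p}=\e$ and hence $(K^2s)^{1/p}<\e$, equivalently $S^{1/p}\le(K\sqrt s)^{1/p}<\sqrt\e$. Thus the potentially enormous factor $(K^2s)^{1/p}$ is absorbed into the constant $\e$, which cancels the $1/\e$ (resp. $1/\sqrt\e$) left over from Khintchine's inequality, so that the residual power of $S$ is exactly $S^{2/q}$, i.e. $S^{1/q}$ after the final $1/2n$-root. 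The rest is bookkeeping: substituting the explicit radii $r_k=2^{-k}2^{1/4p}K(8ps/\e)^{1/2}$ and cardinalities $N_k$ from Lemma \ref{lem:c1} and the Remark following it, tracking the elementary factors of $2$ coming from $(|a|+|b|)^2\le2(|a|^2+|b|^2)$ and from $d_{\bX,2p}(\bu,\pi\bu)\le2r_k$, and checking that they assemble into the constants $2^{10-2k}$, $2^{7-2L}$ and $2$ appearing in \eqref{est:f_k}, \eqref{est:S_L+1} and \eqref{est:S_l}. The hypotheses $p\ge2$ and $N>4p$ from Lemma \ref{lem:cru} guarantee both that $q\in(1,2]$ and that the covering-number estimate of Lemma \ref{lem:c1} is valid throughout the range $l\le k\le L$.
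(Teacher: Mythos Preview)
Your plan is correct and matches the paper's proof almost step for step: Khintchine plus a union bound over the finite nets $\U_l,\U_k$ for $S_l$ and $\|f_k\|_{\boldsymbol\eps,2n}$, a deterministic bound for the tail $S_{L+1}$, and in each case the factorisation $|a^2-b^2|=|a-b|\,(a+b)$ followed by H\"older with the conjugate pair $(p,q)$, $p=\ln(2^{3/4}K^2s)$, so that the $\ell^{2p}$-factor becomes the covering radius and the $\ell^{2q}$-factor reduces to $(K^2s)^{1/p}S^{2/q}$ via $|\langle\bx_j,\bu\rangle|\le K\sqrt s$. One small slip to watch: for $S_{L+1}$ the paper applies Cauchy--Schwarz, $\big|\tfrac1m\sum_j\eps_jc_j\big|\le\big(\tfrac1m\sum_jc_j^2\big)^{1/2}$, which lands directly on the $\ell^2$-quantity to which your ``$d_{\bX,2p}\times\ell^{2q}$'' H\"older splitting applies, whereas your triangle inequality gives the $\ell^1$-mean $\tfrac1m\sum_j|c_j|$, for which H\"older-$(p,q)$ would yield $d_{\bX,p}\times\ell^{q}$ instead; either insert the Cauchy--Schwarz step or adjust the H\"older exponents, and the stated constant $2^{7-2L}$ comes out the same (also note the relevant radius for $f_k$ is $2r_{k-1}$, not $2r_k$).
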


\begin{proof}[Proof of lemma \ref{lem:auxest}]
There are many similarities in proving the above estimates. If we first consider $S_l^{2n}$ for a fixed $\Pi_l\bu$ it follows by Khintchine's and Hölder's inequalities, that
\begin{multline*}
\EV{\boldsymbol{\eps}}\left|\frac1m\sum_{j=1}^m\eps_j|\langle \bx_j,\Pi_l\bu\rangle|^2\right|^{2n}\leq
2^{3/4}\left(\frac{2n}{m\e}\right)^n\left(\frac1m\sum_{j=1}^m|\langle \bx_j,\Pi_l\bu\rangle|^4\right)^n\leq\\
2^{3/4}\left(\frac{2n}{m\e}\right)^n\left(\frac1m\sum_{j=1}^m|\langle \bx_j,\Pi_l\bu\rangle|^{2p}\right)^{n/p}\left(\frac1m\sum_{j=1}^m|\langle \bx_j,\Pi_l\bu\rangle|^{2q}\right)^{n/q}\leq\\
2^{3/4}\left(\frac{2n}{m\e}\right)^n(K^2s)^n\left(\frac1m\sum_{j=1}^m|\langle \bx_j,\Pi_l\bu\rangle|^{2+2q/p}\right)^{n/q}\leq\\
2^{3/4}\left(\frac{2K^2sn}{m\e}\right)^n(K^2s)^{n/p}\left(\sup_{\bu\in\U}\frac1m\sum_{j=1}^m|\langle \bx_j,\bu\rangle|^2\right)^{n/q}=\\
2^{3/4}\left(\frac{2K^2sn}{m\e}\right)^n(K^2s)^{n/p}S^{2n/q}.
\end{multline*}
After the second two lines we simply used that $\|\bx_j\|_\infty\leq K$ and $\|\Pi_l\bu\|_1\leq\sqrt s$ and thus $|\langle\bx_j,\Pi_l\bu\rangle|^2\leq K^2s$. Since the derived estimate holds for any $\Pi_l\bu\in\U_l$ we can use the trivial inequality
\[
\EV{\boldsymbol{\eps}}\sup_{\bu\in\U_k}|f(\boldsymbol{\eps},\bu)|\leq\EV{\boldsymbol{\eps}}\sum_{\bu\in\U_k}|f(\boldsymbol{\eps},\bu)|\leq N_kA
\]
which holds whenever $\EV{\boldsymbol{\eps}}|f(\boldsymbol{\eps},\bu)|\leq A$ and $|\U_k|\leq N_k$ to get
\[
S_l^{2n}\leq N_l\cdot2^{3/4}\left(\frac{2K^2sn}{m\e}\right)^n(K^2s)^{n/p}S^{2n/q}.
\]
In the proof of \eqref{est:f_k}, we will choose $p$ large enough to ensure $(K^2s)^{1/p}\leq\e$. Taking this into account, combined with taking the $2n$:th root of the above inequality, shows \eqref{est:S_l}:
\[
S_l\leq\left(\frac{2K^2sn}{m}\right)^{1/2}(2^{3/4}N_l)^{1/2n}S^{1/q}.
\]
In the same manner one shows for fixed $\bu\in\U_k$,
\begin{multline*}
\EV{\boldsymbol{\eps}}g_k(\boldsymbol{\eps},\bu)^{2n}\leq\\
2^{3/4}\left(\frac{2n}{m\e}\right)^n\left(\frac1m\sum_{j=1}^m(|\langle \bx_j,\bu\rangle|-|\langle \bx_j,\pi_{k-1}\bu\rangle|)^{2p}\right)^{n/p}\cdot\\
\left(\frac1m\sum_{j=1}^m(|\langle \bx_j,\bu\rangle|+|\langle \bx_j,\pi_{k-1}\bu\rangle|)^{2q}\right)^{n/q}\leq\\
2^{3/4}\left(\frac{2n}{m\e}\right)^nd_{\bX,2p}(\bu,\pi_{k-1}\bu)^{2n}\left(\sup_{\bu\in\U}\frac1m\sum_{j=1}^m(2|\langle \bx_j,\bu\rangle|)^{2q}\right)^{n/q}\leq\\
2^{3/4}\left(\frac{2n}{m\e}\right)^n(2r_{k-1}(p))^{2n}4^n(K^2s)^{n/p}S^{2n/q})=\\
2^{3/4}\left(\frac{2^{10-2k}K^2spn}{m\e^2}\right)^n(2^{3/4}K^2s)^{n/p}S^{2n/q}
\end{multline*}
where we plugged in $r_{k-1}(p)=2^{1-k}2^{\frac3{8p}}K\left(\frac{8ps}{\e}\right)^{1/2}$ from the remark following lemma \ref{lem:c1}. Since the above is valid for all $\bu\in\U_k$, we get (similarly as for $S_l$)
\[
\|f_k\|_{\boldsymbol{\eps},2n}\leq\left(\frac{2^{10-2k}K^2spn}{m\e^2}\right)^{1/2}(2^{3/4}K^2s)^{1/2p}(2^{3/4}N_k)^{1/2n}S^{1/q},
\]
where $N_k$ are also chosen as in the remark following lemma \ref{lem:c1}. Choosing $p=\ln(2^{3/4}K^2s)$, ensures that $(2^{3/4}K^2s)^{1/2p}=\e^{1/2}$ which concludes the proof of \eqref{est:f_k}.\\
Lastly, fixing $\bu\in\U$, using Cauchy-Schwarz and Hölder's inequalities,
\begin{multline*}
\EV{\boldsymbol{\eps}}\left|\frac1m\sum_{j=1}^m\eps_j(|\langle \bx_j,\bu\rangle|^2-|\langle \bx_j,\pi_{L}\bu\rangle|^2)\right|^{2n}\leq\\
\frac1{m^{2n}}\EV{\boldsymbol{\eps}}\left|\left(\sum_{j=1}^m(|\langle \bx_j,\bu\rangle|^2-|\langle \bx_j,\pi_{L}\bu\rangle|^2)^2\right)^{1/2}\left(\sum_{j=1}^m\eps_j^2\right)^{1/2}\right|^{2n}=\\
\left(\frac1m\sum_{j=1}^m(|\langle \bx_j,\bu\rangle|-|\langle \bx_j,\pi_{L}\bu\rangle|)^2(|\langle \bx_j,\bu\rangle|+|\langle \bx_j,\pi_{L}\bu\rangle|)^2\right)^n\leq\\
\left(\frac1m\sum_{j=1}^m(|\langle \bx_j,\bu\rangle|-|\langle \bx_j,\pi_{L}\bu\rangle|)^{2p}\right)^{n/p}\left(\frac1m\sum_{j=1}^m(|\langle \bx_j,\bu\rangle|+|\langle \bx_j,\pi_{L}\bu\rangle|)^{2q}\right)^{n/q}\leq\\
(4r_L(p))^{2n}(K^2s)^{n/p}S^{2n/q}=\left(\frac{2^{7-2L}K^2sp}{\e}\right)^{n}(2^{3/4}K^2s)^{n/p}S^{2n/q}=\\
(2^{7-2L}K^2sp)^{n}S^{2n/q}.
\end{multline*}
Since this holds for any $\bu\in\U$, \eqref{est:S_L+1} follows by taking a $2n$:th root.
\end{proof}
Comparing the bounds in \eqref{est:S_l} and \eqref{est:f_k} for $k=l$, one easily sees that choosing 
\[
l:=\left\lfloor\frac12\log_2\left(\frac{2^9p}{\e}\right)\right\rfloor\leq\frac12\log_2\left(\frac{2^9p}{\e}\right)
\]
implies that
\[
S_l\leq\left(\frac{2^{10-2l}K^2snp}{m}\right)^{1/2}\left(\frac{(2^{3/4}N_l)^{1/n}}{\e}\right)^{1/2}S^{1/q}.
\]
Next we will define an increasing sequence $\{n_k\}_{k=l}^L$ by
\[
n_k=\max\left\{\ln(2^{3/4}N_k),\ln\frac{1}{\epsilon}\right\}
\]
implying that $(2^{3/4}N_k)^{\frac1{n_k}}\leq\e$. Choosing $n=n_l$, $p=\ln(2^{3/4}K^2s)$ in lemma \ref{lem:auxest}, and using that $\|\cdot\|_{\boldsymbol{\eps},2n_l}\leq\|\cdot\|_{\boldsymbol{\eps},2n_k}, k\geq l$ we get after this step the estimates
\begin{eqnarray*}
S_l&\leq&\left(\frac{2^{10-2k}K^2spn_l}{m}\right)^{1/2}S^{1/q}=:A_lS^{1/q}\\
\|f_k\|_{\boldsymbol{\eps},2n}&\leq&\|f_k\|_{\boldsymbol{\eps},2n_k}\leq\left(\frac{2^{10-2k}K^2spn_k}{m}\right)^{1/2}S^{1/q}=:A_kS^{1/q}, l<k\leq L.
\end{eqnarray*}
Then by the triangle inequality we have shown
\[
S_l+S_{l+1,L}\leq\sum_{k=l}^L A_kS^{1/q}=\left(\frac{2^{10}K^2sp}{m}\right)^{1/2}S^{1/q}\sum_{k=l}^L\sqrt{2^{-2k}n_k}.
\]
Introducing the covering numbers $N_k$ from the remark after lemma \ref{lem:c1} and observing that $l\geq\frac12\log_2(2^5p)$, we have that if $N\geq4p$ (true by assumption)
\begin{multline*}
2^{3/4}N_k=2^{3/4}\left(\frac{2N\e}{2^{2k}}+\e\right)^{2^{2k}}\leq\left(2^{3/(4\cdot2^{2l})}\left(\frac{2N\e}{2^{2l}}+\e\right)\right)^{2^{2l}}\\
\left(\frac{2^{3/(2^7p)}\e N}{p}\left(\frac1{16}+\frac{p}{N}\right)\right)^{2^{2k}}\leq\left(\frac{N}{p}\right)^{2^{2k}}.
\end{multline*}
This implies that
\begin{multline}
\sum_{k=l}^L\sqrt{2^{-2k}n_k}=\sum_{k=l}^L\sqrt{2^{-2k}\max\{\ln(\sqrt2N_k),\ln(1/\epsilon)\}}\leq\\
\sum_{k=l}^L\max\{\ln^{1/2}(N/p),2^{-k}\ln^{1/2}(1/\epsilon)\}\leq\\(L-l+1)\ln^{1/2}(N/p)+\frac{\ln^{1/2}(1/\epsilon)}{2^{l-1}}\label{eq:sumest}
\end{multline}
To get a bound on $L$ we use the bound of $S_{L+1}$ given by lemma \ref{lem:auxest}. The right hand side of \eqref{est:S_L+1}, and hence also $S_{L+1}$, is less than or equal to $\frac{\lambda g S^{1/q}}{2}$ if and only if
\[
L\geq\frac12\log_2\left(\frac{2^9K^2sp}{(\lambda g)^2}\right),
\]
so we choose 
\[
L=\left\lceil\frac12\log_2\left(\frac{2^9K^2sp}{(\lambda g)^2}\right)\right\rceil.
\]
By the above estimates on $l$ and $L$ we get 
\begin{align*}
L-l+1\leq\frac12\log_2\left(\frac{2^9K^2sp}{(\lambda g)^2}\right)-\frac12\log_2\left(\frac{2^9p}{\e}\right)+3=\frac1{2\ln2}\ln\left(\frac{2^6\e K^2s}{(\lambda g)^2}\right)\\
2^{1-l}\leq\left(\frac{\e}{2^5p}\right)^{1/2}
\end{align*}
Plugging this into \eqref{eq:sumest}, we have shown
\[
\sum_{k=l}^{L}\sqrt{2^{-2k}n_k}\leq\frac1{2\ln2}\ln\left(\frac{2^6\e K^2s}{(\lambda g)^2}\right)\ln^{1/2}(N/p)+\left(\frac{\e}{2^5p}\right)^{1/2}\ln^{1/2}(1/\epsilon).
\]
Thus
\begin{multline*}
S_l+S_{l+1,L}\leq\\
\left(\frac{2^{8}K^2s}{(\ln2)^2m}\right)^{1/2}\left(p^{1/2}\ln\left(\frac{2^6\e K^2s}{(\lambda g)^2}\right)\ln^{1/2}(N/p)+\left(\frac{\e\ln2}{2^3}\right)^{1/2}\ln^{1/2}(1/\epsilon)\right)S^{1/q}.
\end{multline*}
Set now
\begin{equation*}
H=\left(\frac{2^{8}K^2s}{(\ln2)^2m}\right)^{1/2}\left(p^{1/2}\ln\left(\frac{2^6\e K^2s}{(\lambda g)^2}\right)\ln^{1/2}(N/p)+\ln^{1/2}(1/\epsilon^\alpha)\right),\ \alpha=\frac{\e\ln2}{8},
\end{equation*}
so that what we have shown can be expressed by
\[
E_{2n}=S_l+S_{l+1,L}+S_{L+1}\leq\frac{HS^{1/q}}{2}+\frac{\lambda g S^{1/q}}{2}=\frac{H+\lambda g}{2}S^{1/q}.
\]
Plugging stochastic rows $X_j$ back in $S=S(X)$ we have shown
\begin{multline*}
\EV{}\delta_s^{2n}=\EV{}[(2E_{2n}(X))^{2n}]\le( H+\lambda g)^{2n}\EV{}S^{2n/q}=\\
(H+\lambda g)^{2n}\EV{}\sup_{\bu\in\U}\left(\frac1m\sum_{j=1}^m|\langle X_j,\bu\rangle|^2\right)^{n/q}=\\
(H+\lambda g)^{2n}\EV{}\sup_{\bu\in\U}\left(\frac1m\sum_{j=1}^m|\langle X_j,\bu\rangle|^2-\|\bu\|_2^2+\|\bu\|_2^2\right)^{n/q}\leq\\ 
(H+\lambda g)^{2n}\EV{}[(\delta_s+1)^{n/q}].
\end{multline*}
This finally implies that
\begin{equation}
\EV{}[\delta_s^{2n}]^{1/n}\leq (H+\lambda g)^2(\EV{}[\delta_s^{n/q}]^{q/n}+1)^{1/q}\leq (H+\lambda g)^2(\EV{}[\delta_s^{2n}]^{1/2n}+1)^{1/q},
\end{equation}
which concludes the proof of lemma \ref{lem:cru}.
\end{proof}



\section{Appendix}
\subsection{Proof of theorem \ref{thm:best_delta}}
The proof of this theorem requires some simple lemmas.
\begin{lemma}
\label{lem:tlem}
Let $A$ be an $m\times N$-matrix satisfying the RIP-estimate with constants $\delta_s$ and $\bx,\by\in\C^N$ be vectors such that $|\supp\bx\cup\supp\by|\leq2s$ and $\langle\bx,\by\rangle=0$. Let $|t|\leq1$ be such that
\[
\|A\bx\|_2^2-\|\bx\|_2^2=t\delta_{2s}\|\bx\|_2^2
\]
then,
\[
|\langle A\bx,A\by\rangle|\leq\delta_{2s}\sqrt{1-t^2}\|\bx\|_2\|\by\|_2.
\]
\end{lemma}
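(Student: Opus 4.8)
The plan is to reduce to the normalized, phase-aligned case and then exploit the fact that the restricted isometry estimate holds simultaneously for the whole one-parameter family $\bx + c\by$, $c\in\R$. First I would dispose of trivialities (if $\bx=0$ or $\by=0$ the claim is immediate) and observe that both sides of the asserted inequality are homogeneous of degree one in each of $\bx$ and $\by$, and that the defining relation for $t$ is scale invariant; hence I may assume $\|\bx\|_2=\|\by\|_2=1$. Multiplying $\by$ by a unit-modulus constant changes neither its norm, its support, its orthogonality to $\bx$, nor $\|A\by\|_2$, so I may further assume $\rho:=\langle A\bx,A\by\rangle=|\langle A\bx,A\by\rangle|\ge0$ is real. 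Writing $\delta=\delta_{2s}$ and $b=\|A\by\|_2^2$, the hypotheses then read $\|A\bx\|_2^2=1+t\delta$, while the restricted isometry estimate applied to the $2s$-sparse vector $\by$ gives $b\in[1-\delta,1+\delta]$.

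Next, for every real $c$ the vector $\bx+c\by$ is supported on $\supp\bx\cup\supp\by$ and is therefore $2s$-sparse, so the restricted isometry estimate applies to it for all $c$ at once. Using $\langle\bx,\by\rangle=0$ I would compute $\|\bx+c\by\|_2^2=1+c^2$ and
\[
\|A(\bx+c\by)\|_2^2=(1+t\delta)+2c\rho+c^2b.
\]
The two-sided bound then forces two quadratic inequalities in $c$ to hold for every $c\in\R$: from the upper bound, $(b-1-\delta)c^2+2\rho c+\delta(t-1)\le0$, and from the lower bound, $(b-1+\delta)c^2+2\rho c+\delta(t+1)\ge0$. Since the leading coefficients carry the correct sign ($b-1-\delta\le0$ and $b-1+\delta\ge0$) and each inequality is valid for all $c$, each quadratic must have nonpositive discriminant, which produces
\[
\rho^2\le\delta(1-t)(1+\delta-b),\qquad \rho^2\le\delta(1+t)(b-1+\delta).
\]

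The crux is to combine these so that the unknown $b$ drops out with the sharp constant; a single one of the two bounds, together with $b\in[1-\delta,1+\delta]$, only delivers the weaker $\rho\le\delta\sqrt{2(1-t)}$. Taking the convex combination of the two displayed inequalities with weights $(1+t)/2$ and $(1-t)/2$, the common factor $\tfrac12(1-t^2)\delta$ appears in both right-hand sides, the coefficient of $b$ cancels, and the constant term collapses to $\delta^2(1-t^2)$, giving $\rho^2\le\delta^2(1-t^2)$. (Equivalently, one maximizes the minimum of the two right-hand sides over $b\in[1-\delta,1+\delta]$; the maximum is attained where they coincide, at $b=1-t\delta$, again with value $\delta^2(1-t^2)$.) Undoing the normalization restores the factor $\|\bx\|_2\|\by\|_2$ and finishes the argument. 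I expect this combination to be the only genuine obstacle: converting ``valid for all $c$'' into a discriminant condition is routine, but recognizing that one must retain both restricted isometry inequalities and weight them precisely by $(1\pm t)/2$ is exactly what yields $\sqrt{1-t^2}$ rather than a lossy constant.
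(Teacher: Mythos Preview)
Your argument is correct and shares the paper's core idea---apply both restricted isometry inequalities to linear combinations of $\bx$ and $\by$---but the packaging differs. The paper introduces \emph{two} free parameters: it uses the upper RIP bound on $\alpha\bx+\gamma\by$ and the lower bound on $\beta\bx-\gamma\by$ (with $\gamma=\pm1$), adds these so that $\|A\by\|_2^2$ cancels immediately, and is left with $|\langle A\bx,A\by\rangle|\le\delta_{2s}f(\alpha,\beta)$ where $f(\alpha,\beta)=\dfrac{\alpha^2(1-t)+\beta^2(1+t)+2}{2(\alpha+\beta)}$; minimizing $f$ over $\alpha,\beta>0$ then yields $\sqrt{1-t^2}$ at $(\alpha,\beta)=\bigl(\sqrt{(1+t)/(1-t)},\sqrt{(1-t)/(1+t)}\bigr)$. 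You instead keep a single real parameter $c$, turn ``holds for all $c$'' into two nonpositive-discriminant conditions that still contain the auxiliary $b=\|A\by\|_2^2$, and eliminate $b$ by the convex combination with weights $(1\pm t)/2$. Your route trades a two-variable calculus optimization for a clean weighted average; the paper's route never has to carry $b$ at all. Both arrive at the sharp constant for essentially the same reason, and neither is materially shorter than the other.
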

\begin{proof} We can assume $\|\bx\|_2=\|\by\|_2=1$. Pick $\alpha\geq0, \beta\geq0,\gamma=\pm1$ and consider vectors $\alpha\bx+\gamma\by$ and $\beta\bx-\gamma\by$, then
\begin{align}
\|\alpha\bx+\gamma\by\|_2^2=\alpha^2+1\label{eq:normexp1}\\
\|\beta\bx-\gamma\by\|_2^2=\beta^2+1\label{eq:normexp2}\\
\|A(\alpha\bx+\gamma\by)\|_2^2=\alpha^2\|A\bx\|_2^2+\|A\by\|_2^2+2\alpha\gamma\langle A\bx,A\by\rangle\label{eq:normexp3}\\
\|A(\beta\bx-\gamma\by)\|_2^2=\beta^2\|A\bx\|_2^2+\|A\by\|_2^2-2\beta\gamma\langle A\bx,A\by\rangle\label{eq:normexp4}
\end{align}
Furthermore since $A$ satisfies the restricted isometry property
\begin{align}
\|A(\alpha\bx+\gamma\by)\|_2^2-\|\alpha\bx+\gamma\by\|_2^2\leq\delta_{2s}\|\alpha\bx+\gamma\by\|_2^2\label{eq:RIPest1}\\
\|A(\beta\bx-\gamma\by)\|_2^2-\|\beta\bx-\gamma\by\|_2^2\leq\delta_{2s}\|\beta\bx-\gamma\by\|_2^2.\label{eq:RIPest2}
\end{align}
Subtracting \eqref{eq:RIPest2} from \eqref{eq:RIPest1} and plugging in \eqref{eq:normexp1}-\eqref{eq:normexp4} we get
\begin{align*}
(\alpha^2-\beta^2)\|A\bx\|_2^2+2\gamma(\alpha+\beta)\langle A\bx,A\by\rangle-\alpha^2+\beta^2\leq\delta_{2s}(\alpha^2+\beta^2+2)\iff\\
2\gamma(\alpha+\beta)\langle A\bx,A\by\rangle\leq(\beta^2-\alpha^2)(\|A\bx\|_2^2-\|\bx\|_2^2)+\delta_{2s}(\alpha^2+\beta^2+2)\iff\\
\gamma\langle A\bx,A\by\rangle\leq\delta_{2s}\frac{\alpha^2(1-t)+\beta^2(1+t)+2}{2(\alpha+\beta)}.
\end{align*}
Since this holds for $\gamma=\pm1$ and if we set $f(\alpha,\beta)=\frac{\alpha^2(1-t)+\beta^2(1+t)+2}{2(\alpha+\beta)}$ we have shown
\[
|\langle A\bx,A\by\rangle|\leq\delta_{2s}f(\alpha,\beta).
\]
Finally we find the minimum value of $f$ in the first quadrant to be $\sqrt{1-t^2}$ at the critical point $(\alpha,\beta)=\left(\sqrt{\frac{1+t}{1-t}},\sqrt{\frac{1-t}{1+t}}\right)$. Hence
\[
|\langle A\bx,A\by\rangle|\leq\delta_{2s}\sqrt{1-t^2}\|\bx\|_2\|\by\|_2.
\]
\end{proof}
The following result can be found in \cite{CWX2} (proposition 2.1).
 \begin{lemma}\label{lem:CWX2}
Suppose $\bx=(x_1,x_2,\dots,x_s), x_1\geq x_2\geq\cdots\geq x_s\geq0$, then
\[
\|\bx\|_2\leq\frac1{\sqrt{s}}\|\bx\|_1+\frac{\sqrt{s}}4(x_1-x_s).
\]
\end{lemma}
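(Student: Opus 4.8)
The plan is to reduce the statement to a single scalar inequality by exploiting that the sorted entries all lie in the interval $[x_s,x_1]$. Set $m=x_s$ and $M=x_1$, so that $m\le x_j\le M$ for every $j$. The first step is the elementary observation that $(x_j-m)(M-x_j)\ge0$, which rearranges to $x_j^2\le(m+M)x_j-mM$. Summing over $j=1,\dots,s$ gives the linear bound
\[
\|\bx\|_2^2\le(m+M)\|\bx\|_1-smM,
\]
which is the whole point: it replaces the quadratic quantity $\|\bx\|_2^2$ by something affine in $\|\bx\|_1$.

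Writing $a:=\|\bx\|_1$, it then suffices to prove
\[
(m+M)a-smM\le\left(\frac{a}{\sqrt s}+\frac{\sqrt s}{4}(M-m)\right)^2,
\]
because the right-hand side is visibly nonnegative and the left-hand side dominates $\|\bx\|_2^2\ge0$ by the previous step, so the claimed bound follows upon taking square roots. Expanding the right-hand side produces $\tfrac{a^2}{s}+\tfrac{a(M-m)}{2}+\tfrac{s}{16}(M-m)^2$, and after collecting terms the inequality becomes a quadratic in $a$ with positive leading coefficient $1/s$.

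The crux is then to verify that this quadratic is nonnegative for every real $a$, which is cleanest via its discriminant. A direct computation should give
\[
\Delta=\Big(\tfrac{M+3m}{2}\Big)^2-\tfrac4s\Big(\tfrac{s}{16}(M-m)^2+smM\Big)=-2m(M-m),
\]
which is $\le0$ since $m\ge0$ and $M\ge m$. Hence the quadratic never changes sign, so in particular it is nonnegative at the actual value $a=\|\bx\|_1$, completing the argument. I expect the only genuinely delicate point to be the very first step, namely recognizing that the interval membership $x_j\in[m,M]$ is exactly the input needed to linearize $\|\bx\|_2^2$; after that the remainder is a routine discriminant calculation, and the sharp constant $\sqrt s/4$ is precisely what makes $\Delta$ collapse to the nonpositive quantity $-2m(M-m)$.
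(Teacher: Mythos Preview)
Your proof is correct: the linearization $x_j^2\le (m+M)x_j-mM$ via $(x_j-m)(M-x_j)\ge0$, followed by the discriminant computation, goes through exactly as you wrote, and the value $\Delta=-2m(M-m)\le0$ is right.

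As for comparison, the paper does not actually prove this lemma; it simply cites it as Proposition~2.1 of \cite{CWX2}. So there is no ``paper's own proof'' to compare against here---your argument supplies a clean self-contained proof where the paper defers to the literature.
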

\begin{lemma}
\label{lem:l21}
Assume $\mathbf{x}=(x_1,\dots,x_N)\in\C^N$ such that $|x_1|\geq|x_2|\geq\dots\geq|x_N|$. Write $\mathbf{x}=\sum_k\mathbf{x}_{S_k}$ where $S_1=\{1,\dots,t\},S_k=\{t+(k-2)s+1,\dots,t+(k-1)s\}, k>1$, so that $|S_1|=t, |S_k|=s, k>1$ (except for possibly the last $k$), then
\[
\sum_{k>1}\|\mathbf{x}_{S_k}\|_2\leq\frac1{\sqrt s}\|\bx_{S_1^c}\|_1+\frac{\sqrt{s}}4|x_{s+1}|
\]
\end{lemma}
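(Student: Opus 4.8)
The plan is to apply Lemma \ref{lem:CWX2} block by block and then exploit the monotonicity $|x_1|\ge|x_2|\ge\cdots\ge|x_N|$ to telescope the resulting error terms. For each $k>1$ the block $\mathbf{x}_{S_k}$ consists of $s$ coordinates whose magnitudes are already sorted in non-increasing order, so Lemma \ref{lem:CWX2}, applied to the vector of absolute values, gives
\[
\|\mathbf{x}_{S_k}\|_2\le\frac1{\sqrt s}\|\mathbf{x}_{S_k}\|_1+\frac{\sqrt s}4\,(a_k-b_k),
\]
where $a_k=|x_{t+(k-2)s+1}|$ and $b_k=|x_{t+(k-1)s}|$ denote the largest and smallest magnitudes occurring in $S_k$. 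Summing over all $k>1$ immediately produces the first term on the right-hand side, since the blocks $S_k$, $k>1$, partition $S_1^c$ and hence $\sum_{k>1}\|\mathbf{x}_{S_k}\|_1=\|\mathbf{x}_{S_1^c}\|_1$.

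It then remains to control $\sum_{k>1}(a_k-b_k)$. The key observation is that the smallest magnitude in one block dominates the largest magnitude in the next: because $(|x_j|)$ is non-increasing and $S_{k}$ lies entirely before $S_{k+1}$, the index comparison $t+(k-1)s<t+(k-1)s+1$ yields $b_{k}\ge a_{k+1}$, equivalently $a_k\le b_{k-1}$. Writing the sum as $a_2+\sum_{k>2}(a_k-b_{k-1})-b_{\text{last}}$, each bracketed term $a_k-b_{k-1}$ is $\le0$ and $b_{\text{last}}\ge0$, so the whole sum telescopes down to at most $a_2=|x_{t+1}|$. Combining the two estimates gives $\sum_{k>1}\|\mathbf{x}_{S_k}\|_2\le\frac1{\sqrt s}\|\mathbf{x}_{S_1^c}\|_1+\frac{\sqrt s}4|x_{t+1}|$, which is the claim: in the regime $t\ge s$ (in particular the case $t=s$ in which the lemma is actually applied in Proposition \ref{prop:nsp2}) one has $|x_{t+1}|\le|x_{s+1}|$, so the stated bound follows.

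The one point requiring care is the final block, which may contain fewer than $s$ coordinates, whereas Lemma \ref{lem:CWX2} is stated for a length-$s$ vector. The clean fix is to zero-pad that block up to length $s$: padding leaves $\|\mathbf{x}_{S_{\text{last}}}\|_2$ and $\|\mathbf{x}_{S_{\text{last}}}\|_1$ unchanged, keeps the magnitudes non-increasing, and merely replaces $b_{\text{last}}$ by $0$, so Lemma \ref{lem:CWX2} applies with the uniform length $s$ and the telescoping above goes through verbatim (the final term $-b_{\text{last}}$ is simply dropped). I expect this bookkeeping around the short final block — together with checking that the blockwise inequalities combine in the correct direction under summation — to be the only genuinely delicate part; the remainder is a direct substitution into Lemma \ref{lem:CWX2} followed by the monotonicity telescoping.
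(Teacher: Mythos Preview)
Your proposal is correct and follows essentially the same route as the paper's proof: apply Lemma~\ref{lem:CWX2} to each block, sum, and telescope the max--min differences using the monotonicity $b_{k-1}\ge a_k$. You are in fact slightly more careful than the paper, both in handling the possibly short final block via zero-padding and in noting that the natural output of the argument is $|x_{t+1}|$, which dominates the stated $|x_{s+1}|$ only under $t\ge s$ (the case actually used in Proposition~\ref{prop:nsp2}).
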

\begin{proof}
\begin{multline*}
\|\mathbf{x}_{S_k}\|_2\leq\frac1{\sqrt s}\|\mathbf{x}_{S_{k}}\|_1+\frac{\sqrt s}4(|x_{s+(k-2)t+1}|-|x_{s+(k-1)t}|)\leq\\
\frac1{\sqrt s}\|\mathbf{x}_{S_{k}}\|_1+\frac{\sqrt s}4(|x_{s+(k-2)t+1}|-|x_{s+(k-1)t+1}|)
\end{multline*}
by lemma \ref{lem:CWX2}. Summing this over $k>1$ gives (since $S_k\cap S_l=\emptyset, k\neq l$)
\begin{multline*}
\sum_{k>1}\|\mathbf{x}_{S_k}\|_2\leq\sum_{k>1}\left(\frac1{\sqrt s}\|\mathbf{x}_{S_{k}}\|_1+\frac{\sqrt s}4(|x_{s+(k-2)t+1}|-|x_{s+(k-1)t+1}|)\right)\leq\\
\frac1{\sqrt s}\|\bx_{S_1^c}\|_1+\frac{\sqrt{s}}4|x_{s+1}|
\end{multline*}
\end{proof}
\begin{proof}[Proof of proposition \ref{prop:nsp2}] The proposition follows by lemma \ref{lem:l21} if with $t=s$ since then we can estimate the last term in the inequality with
\[
|x_{s+1}|\leq\frac1s\|\mathbf{x}_{S_1}\|_1.
\]
Using this one gets
\begin{gather*}
\frac{\|\bx_{S_1}\|_1}{\sqrt{s}}\leq\|\bx_{S_1}\|_2<\frac45\sum_{k>1}\|\mathbf{x}_{S_k}\|_2\leq\frac1{5\sqrt{s}}\|\mathbf{x}_{S_1}\|_1+\frac4{5\sqrt{s}}\|\mathbf{x}_{S_1^c}\|_1\implies\\
\|\bx_{S_1}\|_1<\|\bx_{S_1^c}\|_1.
\end{gather*}
It is now clear that the same holds for any subset $S\subset[N]$ with $|S|=s$.
\end{proof}
\begin{proof}[Proof of theorem \ref{thm:best_delta}]
Take $A$ and $t$  as in lemma \ref{lem:tlem} and $\bx=\{\bx_{S_k}\}$ as in lemma \ref{lem:l21} (with $|S_k|=s, k=1,2,\dots$, except for possibly the last $k$) such that $A\bx=0$. Then we get since $\|A\bx_{S_1}\|_2^2\geq(1-t\delta_{2s})\|\bx_{S_1}\|_2^2$ that
\begin{multline*}
(1-t\delta_{2s})\|\bx_{S_1}\|_2^2\leq\|A\bx_{S_1}\|_2^2\leq\langle A\bx_{S_1},-A\bx_{S_1^c}\rangle\leq\sum_{k>1}\langle A\bx_{S_1},A(-\bx_{S_k})\rangle\\
\leq\delta_{2s}\sqrt{1-t^2}\|\bx_{S_1}\|_2\sum_{k>1}\|x_{S_k}\|_2\iff\|\bx_{S_1}\|_2\leq\frac{\delta_{2s}\sqrt{1-t^2}}{1-t\delta_{2s}}\sum_{k>1}\|x_{S_k}\|_2.
\end{multline*}
Now we use lemma \ref{lem:l21} and the inequality $\|\bx_{S_1}\|_1\leq\sqrt{s}\|\bx_{S_1}\|_2$.
\begin{gather*}
\frac{\|\bx_{S_1}\|_1}{\sqrt{s}}\leq\frac{\delta_{2s}\sqrt{1-t^2}}{1-t\delta_{2s}}\frac{1}{\sqrt{s}}\left(\|\mathbf{x}_{S_1^c}\|_1+\frac14\|\mathbf{x}_{S_1}\|_1\right)\iff\\
\|\bx_{S_1}\|_1\left(1-\frac{\delta_{2s}\sqrt{1-t^2}}{4(1-t\delta_{2s})}\right)\leq\|\bx_{S_1^c}\|_1\frac{\delta_{2s}\sqrt{1-t^2}}{1-t\delta_{2s}}.
\end{gather*}
It follows that $\|\bx_{S_1}\|_1<\|\bx_{S_1^c}\|_1$ (i.e. the null space property is fulfilled) if
\begin{align*}
\left(1-\frac{\delta_{2s}\sqrt{1-t^2}}{4(1-t\delta_{2s})}\right)>\frac{\delta_{2s}\sqrt{1-t^2}}{1-t\delta_{2s}}\iff\frac45>\frac{\delta_{2s}\sqrt{1-t^2}}{1-t\delta_{2s}}.
\end{align*}
Now observe that the minimum of the right hand side is attained at $t=\delta_{2s}$ and hence we want
\[
\frac45>\frac{\delta_{2s}}{\sqrt{1-\delta_{2s}^2}},
\]
which is fulfilled as long as $\delta_{2s}<\frac4{\sqrt{41}}$.
\end{proof}
\subsection{Improving on theorem \ref{thm:best_delta}}
\label{sec:improvedelta}
Here we will sketch out the details for our, so far, best improvement of theorem \ref{thm:best_delta}. The first step involves to replace $t$ and $s$ in lemma \ref{lem:l21} with $\lceil6s/5\rceil$ and $\lfloor4s/5\rfloor$ respectively, where $s\geq2$ is an integer. We also introduce $S=\{1,2,\dots,s\}\subset S_1$. Then we have
\begin{multline}
\sum_{k>1}\|\mathbf{x}_{S_k}\|_2\leq\frac1{\sqrt{\lfloor4s/5\rfloor}}\|\bx_{S_1^c}\|_1+\frac{\sqrt{\lfloor4s/5\rfloor}}4|x_{\lceil6s/5\rceil+1}|=\\
\frac1{\sqrt{\lfloor4s/5\rfloor}}\left(\|\bx_{S^c}\|_1-\left(\|\bx_{S_1\setminus S}\|_1-\frac{\lfloor4s/5\rfloor}4|x_{\lceil6s/5\rceil+1}|\right)\right)\leq\frac1{\sqrt{4s/5-1}}\|\bx_{S^c}\|_1.\label{helpcalc}
\end{multline}
The last inequality follows since
\begin{multline*}
\|\bx_{S_1\setminus S}\|_1-\frac{\lfloor4s/5\rfloor}4|x_{\lceil6s/5\rceil+1}|\geq(\lceil6s/5\rceil-s)|x_{\lceil6s/5\rceil+1}|-\frac{\lfloor4s/5\rfloor}4|x_{\lceil6s/5\rceil+1}|=\\
\left(\lceil6s/5\rceil-s-\frac{\lfloor4s/5\rfloor}4\right)|x_{\lceil6s/5\rceil+1}|\geq0.
\end{multline*}
Observe that if $5$ divides $s$, we may replace $\frac1{\sqrt{4s/5-1}}$ with $\sqrt{\frac54}$. The improvement of proposition \ref{prop:nsp2} becomes:
\begin{proposition}
Assume $\bx=(x_1,\dots,x_N)\in\C^N$ such that $|x_1|\geq|x_2|\geq\dots\geq|x_N|$ and that $s\geq2$ is an integer. Write $\bx=\sum_k\bx_{S_k}$ where $S_1=\{1,\dots,\lceil6s/5\rceil\},S_2=\{\lceil6s/5\rceil+1,\dots,\lceil6s/5\rceil+\lfloor4s/5\rfloor\}$ etc. so that $|S_1|=\lceil6s/5\rceil, |S_k|=\lfloor4s/5\rfloor, k\geq2$ (except for possibly the last $k$).Then if
\[
\|\bx_{S_1}\|_2<\frac1{\sqrt{s}}\sqrt{4s/5-1}\sum_{k>1}\|\bx_{S_k}\|_2,
\]
it holds that $\|\bx_S\|_1<\|\bx_{S^c}\|_1$ for all subsets $S\subset[N]$ with $|S|=s$. In particular if $5$ divides $s$, 
\[
\|\bx_{S_1}\|_2<\sqrt{\frac45}\sum_{k>1}\|\bx_{S_k}\|_2
\]
is sufficient.
\end{proposition}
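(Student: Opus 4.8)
The plan is to follow the same skeleton as the proof of Proposition~\ref{prop:nsp2}, but to feed in the sharper geometric estimate \eqref{helpcalc} in place of the crude bound obtained from Lemma~\ref{lem:l21} with $t=s$. As before, since the coordinates are arranged so that $|x_1|\geq|x_2|\geq\cdots\geq|x_N|$, it suffices to establish $\|\bx_S\|_1<\|\bx_{S^c}\|_1$ for the single worst-case choice $S=\{1,\dots,s\}$: for this $S$ the quantity $\|\bx_S\|_1$ is as large as possible and $\|\bx_{S^c}\|_1$ as small as possible among all index sets of cardinality $s$, so the inequality for an arbitrary $S$ follows automatically by monotonicity. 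The crucial structural observation is that $S\subset S_1$, because $|S_1|=\lceil6s/5\rceil\geq s$, which lets me pass from an $\ell^1$ statement about $S$ to an $\ell^2$ statement about $S_1$, exactly the object the hypothesis controls.

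First I would record the elementary bound
\[
\|\bx_S\|_1\leq\sqrt{s}\,\|\bx_S\|_2\leq\sqrt{s}\,\|\bx_{S_1}\|_2,
\]
where the first step is Cauchy--Schwarz (since $|S|=s$) and the second uses $S\subset S_1$, so that $\|\bx_S\|_2\leq\|\bx_{S_1}\|_2$. Next I would invoke the hypothesis $\|\bx_{S_1}\|_2<\tfrac1{\sqrt{s}}\sqrt{4s/5-1}\sum_{k>1}\|\bx_{S_k}\|_2$ together with the already-proved estimate \eqref{helpcalc}, namely $\sum_{k>1}\|\bx_{S_k}\|_2\leq\tfrac1{\sqrt{4s/5-1}}\|\bx_{S^c}\|_1$. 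Chaining these gives
\[
\|\bx_{S_1}\|_2<\frac1{\sqrt{s}}\sqrt{4s/5-1}\cdot\frac1{\sqrt{4s/5-1}}\|\bx_{S^c}\|_1=\frac1{\sqrt{s}}\|\bx_{S^c}\|_1,
\]
and combining with the first display yields $\|\bx_S\|_1\leq\sqrt{s}\,\|\bx_{S_1}\|_2<\|\bx_{S^c}\|_1$, which is exactly the desired null-space-type inequality. The factor $\sqrt{4s/5-1}$ in the hypothesis is engineered precisely so as to cancel the factor $1/\sqrt{4s/5-1}$ produced by \eqref{helpcalc}, leaving the clean constant $1/\sqrt{s}$.

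For the divisibility refinement, I would use the sharper form of \eqref{helpcalc} noted in the surrounding text: when $5\mid s$ one has $\lfloor4s/5\rfloor=4s/5$ and the tail factor improves to $\sum_{k>1}\|\bx_{S_k}\|_2\leq\sqrt{\tfrac{5}{4s}}\,\|\bx_{S^c}\|_1$. Under the weaker hypothesis $\|\bx_{S_1}\|_2<\sqrt{4/5}\sum_{k>1}\|\bx_{S_k}\|_2$ this again produces $\|\bx_{S_1}\|_2<\tfrac1{\sqrt{s}}\|\bx_{S^c}\|_1$, so the very same final Cauchy--Schwarz step closes the argument. I do not expect a genuine obstacle here: the analytic content is carried entirely by \eqref{helpcalc}, which is already established, and by Lemma~\ref{lem:CWX2} behind it. The only points requiring care are the bookkeeping with the floors and ceilings (checking $|S_1|\geq s$ so that $S\subset S_1$, and that the telescoped tail estimate survives a possibly short final block) and confirming that the worst-case reduction to $S=\{1,\dots,s\}$ is legitimate even though the hypothesis is phrased through $\|\bx_{S_1}\|_2$ rather than through $\|\bx_S\|_2$ — which it is, because the hypothesis is a single condition on $\bx$ under the fixed decreasing ordering and does not depend on the chosen $S$.
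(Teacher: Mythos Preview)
Your proposal is correct and follows essentially the same argument as the paper: reduce to the worst-case $S=\{1,\dots,s\}\subset S_1$, chain Cauchy--Schwarz $\|\bx_S\|_1\leq\sqrt{s}\|\bx_S\|_2\leq\sqrt{s}\|\bx_{S_1}\|_2$ with the hypothesis and the estimate \eqref{helpcalc}, and conclude $\|\bx_S\|_1<\|\bx_{S^c}\|_1$; the divisibility refinement is handled identically via the sharper constant in \eqref{helpcalc}.
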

\begin{proof}
If $S=\{1,\dots,s\}\subset S_1$, then by \eqref{helpcalc}
\[
\frac{\|\bx_S\|_1}{\sqrt{s}}\leq\|\bx_S\|_2\leq\|\bx_{S_1}\|_2<\frac1{\sqrt{s}}\sqrt{4s/5-1}\sum_{k>1}\|\bx_{S_k}\|_2\leq\frac1{\sqrt{s}}\|\bx_{S^c}\|_1.
\]
\end{proof}
Now we can simply modify the proof of theorem \ref{thm:best_delta} in the previous section in a rather obvious way to find that
\[
\delta_{2s}<\begin{cases}
\sqrt{\frac{4-5/s}{9-5/s}}&, 2\leq s, 5\text{ does not divide } s\\
\frac23&, 2\leq s, 5\text{ divides } s
\end{cases}
\]
implies that the matrix $A$ with restricted isometry constants $\delta_s$ satisfies the null space property of order $s$. The combination of the result of theorem \ref{thm:best_delta} (which is better for small $s$) with the improved one above can be summarized in the following figure
\begin{figure}[h!]
\centering
\includegraphics[scale=0.63]{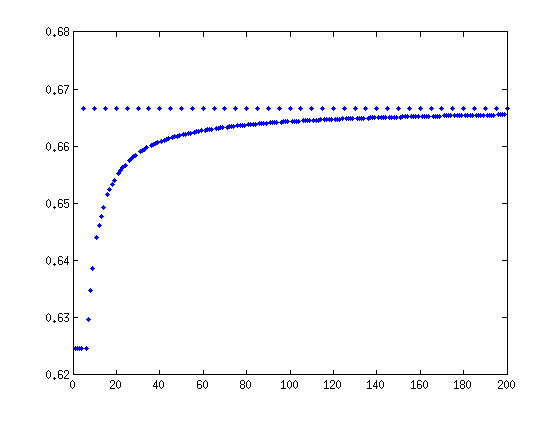}
\caption{Plot of optimal bounds of the constants $\delta_{2s}$ for $s=1,\dots,200$, implying NSP. For the smallest $s$, $4/\sqrt{41}$ is best, while if $5$ divides $s$, $2/3$ will do. For larger $s$ that is not divisible by $5$ an upper bound is given by  $\sqrt{\frac{4-5/s}{9-5/s}}$.}
\label{fig:bestdelta}
\end{figure}


\bibliography{mybib}{}
\bibliographystyle{plain}

\end{document}